\documentclass[%
 reprint,
 amsmath,amssymb,
 aps,
]{revtex4-2}

\usepackage{graphicx}
\usepackage{dcolumn}
\usepackage{bm}
\usepackage{amsmath}
\usepackage{amsfonts}
\usepackage{amsthm}
\newtheorem{theorem}{Theorem}
\newtheorem{corollary}[theorem]{Corollary}
\newtheorem{lemma}[theorem]{Lemma}

\newtheorem{definition}[theorem]{Definition}
\newtheorem{prop}[theorem]{Proposition}

\usepackage[hidelinks]{hyperref}
\usepackage{physics}
\usepackage{color}
\usepackage{float}

\begin{document}

\preprint{APS/123-QED}

\title{Distributed Monogamy of Entanglement limits Quantum Channel Simulation}

\author{Rabsan Galib Ahmed$^{1,2}$}
 \email{rgahmed@uwaterloo.ca}
\author{Graeme Smith$^{1,2}$}%
 \email{graeme.smith@uwaterloo.ca}
\affiliation{$^1$Department of Applied Mathematics, University of Waterloo, Ontario N2L 3G1, Canada.\\
$^2$Institute for Quantum Computing, University of Waterloo, Ontario N2L 3G1, Canada.
}%




\date{\today}

\begin{abstract}
      Entanglement is monogamous: if it is shared among more than two parties, the entanglement between any pair cannot be very strong. For an integer $k\geq 2$, $k$-extendibility of a state $\rho_{AB}$ quantifies this as the number of copies of $B$ that can be simulated by the state's environment. We introduce \textit{fractional extendibility}, which gives a finer characterization of the quantum correlation that is leaked to the environment, and prove that it is invariant under tensor products and monotonic under local processing.  We also establish the \textit{distributed monogamy of entanglement}: for any state on $AB_1B_2\dots B_n$, the maximum average probability of extracting an EPR pair from a random subset of $k \leq n/2$ systems among the $B_i$'s is the fraction $k/n$. With these tools we resolve a conjecture posed by Matthew Hastings: any quantum erasure channel with erasure probability at least $\frac{1}{2}$ cannot simulate a less noisy erasure channel, even with asymptotically many uses of the noisier channel. 
\end{abstract}

\maketitle
Information theory can be viewed as the study of resource interconversion. In quantum Shannon theory, quantum channel simulation is a central instance of this viewpoint. Given many independent uses of a channel, one asks whether it is possible to approximate a target channel with suitable encoding and decoding operations, and what resource cost is required to do so. 
When the target channel is a noiseless quantum channel, the problem becomes determining the optimal rate at which reliable quantum communication can be extracted from a noisy channel, which is called the quantum capacity.
More generally, the problem of simulating one noisy channel with another is substantially more subtle~\cite{Bennett_2014,Berta_2011,Fang_2020}. This difficulty makes channel simulation a natural framework for probing the structure and limitations of quantum information-processing.

The quantum erasure channel, $\mathcal{N}_{\lambda}$, is a particularly important target for channel simulation.  This channel transmits its input perfectly with \textit{transmission probability} $\lambda$, while with probability $1-\lambda$ it replaces the input by an orthogonal erasure flag~\cite{Bennett_1997}. The quantum capacity of $\mathcal{N}_{\lambda}$ is given by $\mathcal{Q}(\mathcal{N}_{\lambda}) = \max\{(2\lambda-1)\log_2d_{input},0\}$~\cite{Bennett_1997}. The vanishing of the quantum capacity for $\lambda \leq 1/2$ can be understood from no-cloning: in this regime, the environment receives at least as much information about the input as the receiver. If the receiver could decode an arbitrary encoded quantum state with asymptotically unit fidelity, then the environment could do so as well, which would violate the no-cloning theorem~\cite{Wootters1982-cz}.


Quantum channels with vanishing quantum capacity exhibit richer structure than their classical counterparts, making their communication value highly non-trivial to characterize. For example, with \textit{superactivation of quantum capacity} two channels, each with zero quantum capacity, can be combined to get positive capacity. In particular, a private Horodecki channel~\cite{Horodecki_2008} together with a symmetric channel~\cite{Smith_2008}, such as the $50\%$ erasure channel, gives a joint channel with positive quantum capacity~\cite{Simth-Yard_2008}. Such structures impose strong constraints on channel simulation even within zero capacity channels. For instance, the private channel can transmit secret key at a positive rate, whereas the symmetric channel has zero private capacity. This rules out simulating the former using the latter. This is in stark contrast with classical Shannon theory, where channels can be divided into two equivalence classes under channel simulation: channels with positive capacity and channels with zero capacity. Channels within each class can all simulate each other.  

Even among quantum erasure channels $\mathcal{N}_\lambda $ with $\lambda\leq 1/2$, all of which have zero quantum and private capacities, channel simulation can still be impossible. In~\cite{Hastings2008}, Hastings raised the question: \textit{Can we simulate $\mathcal{N}_\lambda$ with $\mathcal{N}_{\gamma}$ for some $\gamma <\lambda \leq \frac{1}{2}$?} and conjectured that the answer is no. 

One special case of this conjecture can be proved directly. For any $(l+1)^{-1}\leq \gamma \leq l^{-1}$ and ${(k+1)}^{-1}< \lambda\leq k^{-1}$ with integers $l,k$ and $l>k\geq 2$, it is impossible to simulate $\mathcal{N}_\lambda$ with $\mathcal{N}_\gamma$. This is because such $\mathcal{N}_\gamma$ and $\mathcal{N}_\lambda$ are $l$-extendible and not $l$-extendible respectively. Monotonicity of extendibility under tensor power, pre- and post-processing and the compactness of the set of all $k$-extendible channels with fixed finite input-output dimension, then rules out the simulation. However, even though partial progress has been made in~\cite{Hastings2008,Alhejji2023}, the more general question remains unanswered. 

The central obstacle to answering this question is the lack of a clean characterization of exactly how much \textit{noisier} $\mathcal{N}_{\gamma}$ is compared to $\mathcal{N}_{\lambda}$ when $\gamma <\lambda \leq 1/2$. Extendibility is too crude to make this distinction when $(k+1)^{-1}\leq \gamma<\lambda \leq k^{-1}$ for some integer $k$. Therefore, we generalize $k$-extendibility to \textit{fractional extendibility} to give a finer quantification of the quantum correlation shared with the environment through a noisy channel. We prove that fractional extendibility is invariant under tensor power and monotonic under pre-processing and post-processing. Furthermore, we establish the \textit{distributed monogamy of entanglement}, a robust universal limit to the maximum average probability with which a random $k$-subset of an $n$-party state can obtain an EPR pair with some reference system.  This holds for $2k\leq n$. We upper bound the fractional extendibility of a quantum erasure channel using this limit. Finally, with all these tools we show that the aforementioned simulation is impossible for all $\gamma<\lambda\leq \frac{1}{2}$.

Note that fractional extendibility for Gaussian states and channels was introduced by the authors in an earlier article~\cite{ahmed2026}, and is generalized to arbitrary quantum systems in this work.
\begin{figure}
    \centering
    \includegraphics[width=0.9\linewidth]{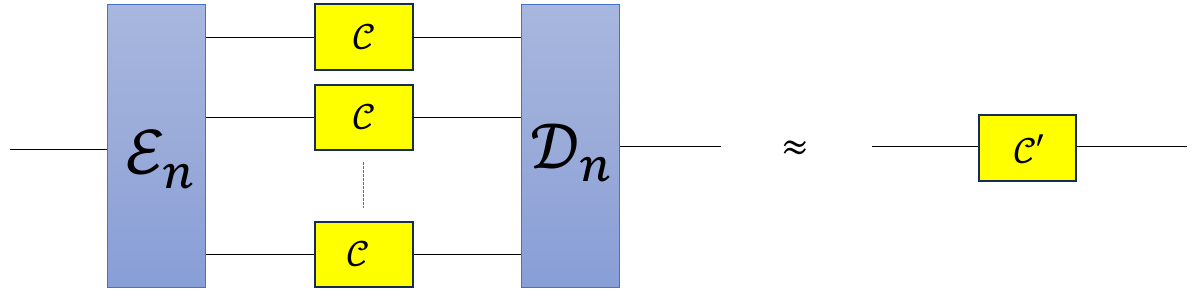}
    \caption{Simulating one quantum channel ($\mathcal{C}'$) with another ($\mathcal{C}$). We use the trace distance between the Choi states of the channels in the figure to quantify the error of simulation. In the asymptotic limit of many uses of $\mathcal{C}$ ($n\to \infty$), we require a vanishingly small error.}
    \label{fig:channel simulation}
\end{figure}

\textit{Preliminaries ---} We say that a quantum channel, $\mathcal{C}: A \to B$, simulates (Fig.~\ref{fig:channel simulation}) the quantum channel $\mathcal{C}': A'\to B'$ if there is a sequence of encoders, $\mathcal{E}_n: A'\to A^{\otimes n}$ and a corresponding sequence of decoders, $\mathcal{D}_n: B^{\otimes n} \to B'$ such that $\lim_{n\to \infty} \frac{1}{2}\norm{\mathcal{J}_{\mathcal{C}'} - \mathcal{J}_{\mathcal{D}_n\circ\mathcal{C}^{\otimes n}\circ\mathcal{E}_n}}_1 = 0$, where $\mathcal{J}_\mathcal{N}:=\mathrm{id}_R\otimes\mathcal{N}(\phi_{RA'})$ is the Choi state of the channel, $\mathcal{N}:A'\to B'$ and $R\cong A', d=\dim R$ with $\phi_{RA'}=\frac{1}{d} \sum_{i,j=1}^d \ketbra{ii}{jj}$, the maximally entangled state~\footnote{Equivalently we could use the diamond distance. As we consider channels with fixed finite dimensional input and output, vanishing of these two distances are equivalent.}. The Choi state of an erasure channel, $\mathcal{N}_{\lambda}$, is given by
\begin{align}
    \rho^{\lambda}_{RB} = \lambda \;\phi_{RC} + (1-\lambda) \frac{\mathbb{I}_R}{d}\otimes \ketbra{e}{e}_F,
\end{align}
with $B\cong C\oplus F$.

Graph theoretic methods are useful for several branches of quantum information theory~\cite{CSW2014,DSW_2013,Duan_2016,Kunjwal_2019}. A graph, $G\equiv(V,E)$, comprises of a vertex set, $V$ and an edge set, $E$, whose elements are unordered pair of vertices. A $\mu$-orthonormal representation ($\mu$-OR) of $G$ is the assignment of a unit vector $\ket{v_i}\in \mathbb{C}^m$ for some integer, $m$, to each vertex $i\in V$ such that $|\langle v_i|v_j\rangle|\leq \mu$ whenever $\{i,j\}\in E$. Furthermore, with each $\mu$-OR, we associate a unit vector, $\ket{\psi}\in \mathbb{C}^m$, called the \textit{handle vector}. We define the $\mu$-approximate Lov\'asz theta function of a graph, $G$, as 
\begin{align}
    \vartheta_{\mu}(G):= \max \sum_{i\in V} |\langle \psi | v_i\rangle|^2,\label{eq:lovasz-theta}
\end{align}
where the maximization ranges over all $\mu$-ORs and  choices of handle vectors. Note that $\vartheta_0(G)\equiv \vartheta(G)$ is the standard Lov\'asz theta function of $G$~\cite{Lovasz_1979}. In fact, we have the following
\begin{lemma}$\lim_{\mu\to 0} \vartheta_\mu (G) = \vartheta(G)$.\label{lem:lovasz-approx}\end{lemma}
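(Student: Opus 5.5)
The argument is a compactness argument. It has two parts: monotonicity in $\mu$, and a limiting step that turns approximate representations into exact ones.

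\emph{Monotonicity.} For $\mu\le\mu'$, every $\mu$-OR is also a $\mu'$-OR. Hence $\vartheta_\mu(G)$ is nondecreasing in $\mu$, and $\vartheta_\mu(G)\ge\vartheta_0(G)=\vartheta(G)$ for every $\mu\ge 0$. It follows that $\lim_{\mu\to 0^+}\vartheta_\mu(G)$ exists and is at least $\vartheta(G)$. It remains to show it is at most $\vartheta(G)$.

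\emph{Reducing the dimension.} Set $n=|V|$. Given any $\mu$-OR $\{\ket{v_i}\}$ in $\mathbb{C}^m$ with handle $\ket{\psi}$, I would project onto $W=\mathrm{span}\{\ket{\psi},\ket{v_1},\dots,\ket{v_n}\}$. This subspace has dimension at most $n+1$. An isometry $W\to\mathbb{C}^{n+1}$ preserves all inner products, so it preserves both the constraints and the objective. The maximization defining $\vartheta_\mu(G)$ can therefore be taken over $(\ket{\psi},\ket{v_1},\dots,\ket{v_n})\in(S^{2n+1})^{n+1}$, the product of unit spheres in $\mathbb{C}^{n+1}$.

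\emph{The limit.} This domain is compact. For each $\mu$ the constraints $|\langle v_i|v_j\rangle|\le\mu$ for $\{i,j\}\in E$ cut out a closed subset, and the objective is continuous. So the maximum is attained for every $\mu$ (and in particular $\vartheta_0$ agrees with Lovász's definition in dimension $n+1$, which is standard). Now take a sequence $\mu_k\to 0$ with maximizers $x_k$. By compactness, a subsequence converges to some $x^*$. For each edge, $|\langle v_i|v_j\rangle|$ is continuous and at most $\mu_k$ along the sequence, so it equals $0$ at $x^*$. Thus $x^*$ is a genuine orthonormal representation with a handle. By continuity of the objective,
\begin{align}
\lim_{k}\vartheta_{\mu_k}(G)=\sum_i|\langle\psi^*|v_i^*\rangle|^2\le\vartheta(G).
\end{align}
Combined with monotonicity, this gives the lemma.

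The only delicate point is the dimension reduction: the definition allows arbitrary $m$, and compactness requires a fixed finite dimension. The projection argument handles this. A quantitative alternative would be to perturb a $\mu$-OR to an exact OR via Gram--Schmidt with $O(\mu)$ error, but compactness avoids that bookkeeping.
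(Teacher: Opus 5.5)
Your proposal is correct and is essentially the paper's argument: monotonicity in $\mu$ gives the lower bound, and compactness along a sequence $\mu_m\to 0$ of optimizers yields a limit that is an exact orthonormal representation with handle, which gives the upper bound. The only difference is cosmetic: the paper works with the $(|V|+1)\times(|V|+1)$ Gram matrices of the optimal vectors (PSD with unit diagonal, a compact set), which absorbs the unbounded ambient dimension implicitly, whereas you get the same effect by projecting onto the span; you also justify that the maximum is attained, a step the paper takes for granted.
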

\begin{proof}
    As $\vartheta(G) \leq \vartheta_{\mu'}(G)\leq \vartheta_{\mu}(G)$ for $0\leq \mu'\leq \mu$, we have $\lim_{\mu\to 0} \vartheta_{\mu}(G) \geq \vartheta(G)$. To show the converse, choose a sequence, $\{\mu_m\}$ with $\lim_{m\to \infty} \mu_m=0$. Consider a corresponding sequence of optimal $\mu$-OR and handle vectors, equivalently, the sequence of their Gram matrices, $X_{\mu_m}$. As the $\mu$-OR and handle vectors are all unit vectors, $X_{\mu_m}$ are positive semidefinite (PSD) matrices with diagonal entries equal to one. As all such matrices form a compact set, there exists a subsequence $\mu_{m_l}\to 0$, such that $X_{\mu_{m_l}}$ converges to a PSD matrix $X^*$. Hence, the diagonal entries of $X^*$ are one and for all $\{i,j\}\in E$, $|X^*_{ij}| =\lim_{l\to \infty} |(X_{\mu_{m_l}})_{i,j}| = 0$. Therefore, $X^*$ is a feasible solution to the optimization for $\vartheta(G)$, i.e., $\lim_{\mu\to 0} \vartheta_{\mu}(G) \leq \vartheta(G)$.
\end{proof}
The Kneser graph, $KG_{n,k}$ consists of vertices representing the $k$-subsets of $[n]:=\{1,2,\dots,n\}$, with an edge between two vertices representing disjoint $k$-subsets. Theorem 13 of~\cite{Lovasz_1979} shows that when $n\geq 2k$,
\begin{align}\label{eq:lovasz-Kneser}
    \vartheta(KG_{n,k}) = {n-1 \choose k-1}.
\end{align}
We have derived the exact expression for the $\mu$-approximate Lov\'asz theta function of the graph $KG_{n,k}$ and included in the Supplemental Material.

 $k$-extendibility of quantum states and channels is a powerful tool in quantum information theory with several applications~\cite{Brand_o_2012,Li_2018,Kaur_2019,Khatri_2017}. A bipartite state $\rho_{AB}$ (and its isomorphic channel) is $k$-extendible for some integer $k\geq 1$ with respect to subsystem $B$ if there exists a quantum state $\rho_{AB_1B_2\dots B_k}$ such that $\rho_{AB} = \rho_{AB_j} = \Tr_{B^k\setminus B_j }[\rho_{AB_1B_2\dots B_k}]$. 

It is easy to see that $\rho^{\lambda}_{AB}$ with $\lambda =1/k$ is $k$-extendible. Indeed with probability $1/k$ the receiver obtains the EPR pair with $A$, while it goes to the environment with probability $(k-1)/k$. One can further split the environment into $(k-1)$ parties each receiving the EPR pair with probability $1/k$. Thus each of the $(k-1)$ parties in the environment is equivalent to the receiver $B$, demonstrating the $k$-extendibility.

\textit{Main Results ---}
Consider a set of $n$ systems, $\{B_i: i\in [n]\}$. For a $k$-subset of $[n]$, $S \subseteq_k [n]$, we denote $B_S$ as the set of subsystems indexed by the elements of $S$, i.e., $B_S := \bigotimes_{i\in S} B_i$. We introduce the following definition:
\begin{definition}[$(p,q)$-extendibility]\label{def:fract-ext}
    A state $\rho_{AB}$ is $(p,q)$-extendible with $p\geq q$ if there is a state $\sigma_{AB_1B_2\dots B_{p+q}}$, such that for every $S\subseteq_q [p+q]$, there exists a channel $\mathcal{R}_S:B_S \to B$ satisfying $\mathrm{id}_{A}\otimes \mathcal{R}_S(\sigma_{AB_S}) = \rho_{AB}$.
\end{definition}
In particular if a $(p,q)$-extension, $\sigma_{AB_{[p+q]}}:=\sigma_{AB_1B_2\dots B_{p+q}}$ exists that is permutation-invariant over $B_i$'s, $\rho_{AB}$ is called symmetrically $(p,q)$-extendible. Unless explicitly mentioned, we always assume that the state is extendible with respect to the second subsystem. Now we establish three important properties of $(p,q)$-extendibility.
\begin{lemma}
    The tensor product of two $(p,q)$-extendible states is $(p,q)$-extendible.
\end{lemma}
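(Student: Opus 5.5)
The plan is to build the extension of the product directly from the two given extensions and to use tensor products of the recovery channels. Let $\rho_{AB}$ and $\tau_{A'B'}$ be $(p,q)$-extendible. Definition~\ref{def:fract-ext} then gives states $\sigma_{AB_1\dots B_{p+q}}$ and $\omega_{A'B'_1\dots B'_{p+q}}$. It also gives, for each $S\subseteq_q[p+q]$, channels $\mathcal{R}_S:B_S\to B$ and $\mathcal{R}'_S:B'_S\to B'$ such that
\begin{align}
\mathrm{id}_A\otimes\mathcal{R}_S(\sigma_{AB_S})=\rho_{AB},\qquad \mathrm{id}_{A'}\otimes\mathcal{R}'_S(\omega_{A'B'_S})=\tau_{A'B'}.
\end{align}
We regard $\rho_{AB}\otimes\tau_{A'B'}$ as a bipartite state on $(AA')(BB')$, extended with respect to $BB'$. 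As the candidate extension we take $\Sigma:=\sigma\otimes\omega$, and we group the systems into $p+q$ parties $\tilde B_i:=B_iB'_i$, pairing the $i$-th copy from each extension.

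For the recovery step, fix $S\subseteq_q[p+q]$. Then $\tilde B_S=B_SB'_S$ after reordering tensor factors. Taking partial traces of a product state gives $\Sigma_{AA'\tilde B_S}=\sigma_{AB_S}\otimes\omega_{A'B'_S}$. We define $\tilde{\mathcal{R}}_S:=\mathcal{R}_S\otimes\mathcal{R}'_S:\tilde B_S\to BB'$, which is a channel because it is a tensor product of CPTP maps. Applying it gives
\begin{align}
\mathrm{id}_{AA'}\otimes\tilde{\mathcal{R}}_S(\Sigma_{AA'\tilde B_S})=\big(\mathrm{id}_A\otimes\mathcal{R}_S(\sigma_{AB_S})\big)\otimes\big(\mathrm{id}_{A'}\otimes\mathcal{R}'_S(\omega_{A'B'_S})\big)=\rho_{AB}\otimes\tau_{A'B'}.
\end{align}
This holds for every $q$-subset $S$, which completes the proof.

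There is no real obstacle here. The one point to check is that the same index set $S$ is used in both factors, which the pairing $\tilde B_i=B_iB'_i$ guarantees. The rest is bookkeeping on the ordering of the tensor factors. The same construction also shows that a product of symmetric extensions is symmetric: permuting the $\tilde B_i$ permutes the $B_i$ and the $B'_i$ simultaneously, and each factor is invariant under that permutation.
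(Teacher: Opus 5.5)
Your proposal is correct and matches the paper's proof: the paper also takes the product extension $\sigma\otimes\omega$, pairs the $i$-th copies $B_iD_i$ into a single party, and uses $\mathcal{R}_S\otimes\mathcal{T}_S$ as the recovery channel for each $q$-subset $S$. Your added remark that a product of symmetric extensions stays symmetric under this pairing is also correct, though the paper does not state it.
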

\begin{proof}
Let the states $\rho_{AB}$ and $\tau_{CD}$ be $(p,q)$-extendible on $B$ and $D$ respectively. Therefore, there exist states $\sigma_{AB_{[p+q]}}$ and $\omega_{CD_{[p+q]}}$ such that for every $S\subseteq_q [p+q]$, there exist channels $\mathcal{R}_S:B_S\to B$ and $\mathcal{T}_S:D_S\to D$ such that $\mathrm{id}_{A}\otimes \mathcal{R}_S(\sigma_{AB_S}) = \rho_{AB}$ and $\mathrm{id}_{C}\otimes \mathcal{T}_S(\omega_{CD_S}) = \tau_{CD}$. Now consider the state $\Xi_{AC B_{[p+q]}D_{[p+q]}}:= \sigma_{AB_{[p+q]}}\otimes \omega_{CD_{[p+q]}}$. For every $S\subseteq_q [p+q]$, we have that 
\begin{align*}
    &\mathrm{id}_{AC}\otimes (\mathcal{R}_S\otimes \mathcal{T}_S)(\Xi_{AC B_{S}D_{S}})\\ =\; &(\mathrm{id}_A\otimes \mathcal{R}_S)(\sigma_{AB_{S}}) \otimes (\mathrm{id}_C\otimes \mathcal{T}_S)(\omega_{CD_{S}}) = \rho_{AB}\otimes \tau_{CD}.
\end{align*}
Therefore, $\rho_{AB}\otimes \tau_{CD}$ is $(p,q)$-extendible on system $BD$ with the $(p,q)$-extension, $\Xi_{AC B_{[p+q]}D_{[p+q]}}$. 
\end{proof}

Applied iteratively, this implies: 

\begin{corollary}[Invariance under tensor power] \label{lemma:tensor_inv}
    If a state $\rho_{AB}$ is $(p,q)$-extendible, then so is $\rho^{\otimes n}_{AB}$, $\forall n\in \mathbb{N}$.
\end{corollary}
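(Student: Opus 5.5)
The plan is induction on $n$, with the preceding lemma (closure of $(p,q)$-extendibility under tensor products) as the inductive step.

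\emph{Base case.} For $n=1$ the claim is the hypothesis itself.

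\emph{Inductive step.} Suppose $\rho^{\otimes n}_{AB}$ is $(p,q)$-extendible with respect to the composite system $B^{n}:=B^{\otimes n}$. Write
\[
\rho^{\otimes (n+1)}_{AB} \cong \rho^{\otimes n}_{A^nB^n}\otimes \rho_{AB},
\]
where the factors are regrouped so that all the $A$ systems form the first party and all the $B$ systems form the second. Apply the previous lemma with $\rho^{\otimes n}$ in the role of the first state (on $A^n$, $B^n$) and $\rho$ in the role of the second (on $C=A$, $D=B$). This gives that the product is $(p,q)$-extendible on $B^nB\cong B^{n+1}$, which closes the induction.

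The only point needing care is that the lemma was stated for generic states $\rho_{AB}$ and $\tau_{CD}$ with arbitrary finite-dimensional systems. Instantiating it with $A\to A^n$ and $B\to B^n$ is therefore legitimate. The recovery channels for the product are $\mathcal{R}^{(n)}_S\otimes\mathcal{R}_S$, acting on the pair $(B^n)_S\otimes B_S$. The reordering of tensor factors is a fixed unitary relabeling that commutes with these local maps.

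Unwinding the induction, the explicit extension is $\sigma^{\otimes n}$, with the $i$-th extended system being $B_i^{\otimes n}$ and recovery maps $\mathcal{R}_S^{\otimes n}$. No real obstacle is expected; this is just iteration of the preceding lemma.
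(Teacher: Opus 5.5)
Your proof is correct and matches the paper, which obtains the corollary simply by applying the tensor-product lemma iteratively; your induction makes that iteration explicit. The unwound extension $\sigma^{\otimes n}$, with extended systems $B_i^{\otimes n}$ and recovery maps $\mathcal{R}_S^{\otimes n}$, is exactly what repeated use of the lemma's construction produces.
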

\begin{lemma}[Monotonicity under post-processing]\label{lem:post-process}
    If a state $\rho_{AB}$ is $(p,q)$-extendible, then so is $\mathrm{id}_A\otimes\mathcal{D}(\rho_{AB})$ for all channels, $\mathcal{D}:B\to B'$.
\end{lemma}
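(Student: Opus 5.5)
The plan is to reuse the $(p,q)$-extension of $\rho_{AB}$ unchanged and compose the recovery channels with $\mathcal{D}$. Since $\rho_{AB}$ is $(p,q)$-extendible, Definition~\ref{def:fract-ext} gives a state $\sigma_{AB_{[p+q]}}$ and, for every $S\subseteq_q[p+q]$, a channel $\mathcal{R}_S:B_S\to B$ with $\mathrm{id}_A\otimes\mathcal{R}_S(\sigma_{AB_S})=\rho_{AB}$. I will take the same $\sigma_{AB_{[p+q]}}$ as the candidate $(p,q)$-extension of $\rho'_{AB'}:=\mathrm{id}_A\otimes\mathcal{D}(\rho_{AB})$.

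For each $S$, I define the new recovery map $\mathcal{R}'_S:=\mathcal{D}\circ\mathcal{R}_S:B_S\to B'$. This is a composition of CPTP maps, so it is a channel. The key computation is then
\begin{align*}
\mathrm{id}_A\otimes\mathcal{R}'_S(\sigma_{AB_S}) = (\mathrm{id}_A\otimes\mathcal{D})\big(\mathrm{id}_A\otimes\mathcal{R}_S(\sigma_{AB_S})\big) = \mathrm{id}_A\otimes\mathcal{D}(\rho_{AB}) = \rho'_{AB'}.
\end{align*}
The first equality is just functoriality of the tensor product, $(\mathrm{id}_A\otimes\mathcal{D})\circ(\mathrm{id}_A\otimes\mathcal{R}_S)=\mathrm{id}_A\otimes(\mathcal{D}\circ\mathcal{R}_S)$. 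The second uses the extension property of $\sigma$. Since this holds for every $q$-subset $S$, $\sigma_{AB_{[p+q]}}$ is a $(p,q)$-extension of $\rho'_{AB'}$, which proves the statement.

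There is no real obstacle here. The only point worth checking is that the definition lets the recovery channel depend on $S$ and have arbitrary output system. That holds, so no symmetrization is needed. I will also remark that if $\sigma$ was a symmetric extension it remains one, because it is unchanged, so symmetric $(p,q)$-extendibility is also preserved.
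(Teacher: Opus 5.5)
Your proof is correct and is the same as the paper's: keep the extension $\sigma_{AB_{[p+q]}}$ unchanged and use $\mathcal{D}\circ\mathcal{R}_S$ as the recovery channel for each $q$-subset $S$. Your remark that symmetric $(p,q)$-extendibility is also preserved is valid, since the extension itself is not modified.
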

\begin{proof}
    Note that $\sigma_{AB_{[p+q]}}$ is a $(p,q)$-extension of $\mathrm{id}_A\otimes\mathcal{D}(\rho_{AB})$ with $\mathcal{D}\circ \mathcal{R}_S$ being the processing on subsystem $B_S$ for $S\subseteq_q[p+q]$.
\end{proof}
\begin{lemma}[Monotonicity under pre-processing]\label{lem:pre_process}
    If a state $\rho_{AB}$ is $(p,q)$-extendible, then so is $\mathcal{E}\otimes\mathrm{id}_B(\rho_{AB})$ for all CP maps, $\mathcal{E}:A\to R$ with $\Tr \mathcal{E}(\rho_{A}) = 1$.
\end{lemma}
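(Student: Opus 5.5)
The natural extension to try is $\sigma'_{RB_{[p+q]}} := (\mathcal{E}\otimes \mathrm{id}_{B_{[p+q]}})(\sigma_{AB_{[p+q]}})$, where $\sigma_{AB_{[p+q]}}$ is the given $(p,q)$-extension of $\rho_{AB}$. The recovery channels $\mathcal{R}_S$ are kept unchanged. The idea is that the map $\mathcal{E}$ on $A$ and the channels $\mathcal{R}_S$ on the $B$-systems act on disjoint tensor factors and therefore commute. Two things need checking: that $\sigma'$ is a legitimate state, and that every $q$-subset still recovers the processed state.

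\textbf{Step 1: $\sigma'$ is a state.} Since $\mathcal{E}$ is completely positive, $\mathcal{E}\otimes\mathrm{id}$ is positive, so $\sigma'\geq 0$. For the trace, the $B$-systems can be traced out first:
\begin{align*}
\Tr\sigma' = \Tr\,\mathcal{E}(\Tr_{B_{[p+q]}}\sigma_{AB_{[p+q]}}) = \Tr\,\mathcal{E}(\sigma_A).
\end{align*}
We also need $\sigma_A=\rho_A$. This follows by tracing $B$ out of $\mathrm{id}_A\otimes\mathcal{R}_S(\sigma_{AB_S})=\rho_{AB}$, because $\mathcal{R}_S$ is trace preserving. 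Hence $\Tr\sigma'=\Tr\mathcal{E}(\rho_A)=1$ by hypothesis. This is the only place where the normalization assumption on $\mathcal{E}$ is used, and it is the step most likely to hide a subtlety. The point is that $\mathcal{E}$ need only be trace preserving on the relevant input, not globally, and the identity $\sigma_A=\rho_A$ is exactly what lets the hypothesis apply.

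\textbf{Step 2: marginals on $B_S$.} For $S\subseteq_q[p+q]$, the partial trace over $B_{[p+q]\setminus S}$ commutes with $\mathcal{E}\otimes\mathrm{id}$, so $\sigma'_{RB_S}=(\mathcal{E}\otimes\mathrm{id}_{B_S})(\sigma_{AB_S})$.

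\textbf{Step 3: recovery.} Since the two maps act on different tensor factors,
\begin{align*}
(\mathrm{id}_R\otimes\mathcal{R}_S)(\sigma'_{RB_S}) = (\mathcal{E}\otimes\mathrm{id}_B)\big((\mathrm{id}_A\otimes\mathcal{R}_S)(\sigma_{AB_S})\big) = (\mathcal{E}\otimes\mathrm{id}_B)(\rho_{AB}).
\end{align*}
This is the required identity for every $q$-subset $S$, so $\sigma'_{RB_{[p+q]}}$ is a $(p,q)$-extension of $(\mathcal{E}\otimes\mathrm{id}_B)(\rho_{AB})$.

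If the original extension is permutation invariant over the $B_i$'s, then so is $\sigma'$, since $\mathcal{E}$ does not touch the $B$-systems. The statement therefore also holds for symmetric $(p,q)$-extendibility.
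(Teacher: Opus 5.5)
Your proposal is correct and matches the paper's proof: the paper also takes $(\mathcal{E}\otimes\mathrm{id}_{B_{[p+q]}})(\sigma_{AB_{[p+q]}})$ as the extension, keeps the same recovery channels $\mathcal{R}_S$, and notes $\Tr\mathcal{E}(\sigma_A)=\Tr\mathcal{E}(\rho_A)=1$. Your derivation of $\sigma_A=\rho_A$ from the trace preservation of $\mathcal{R}_S$ spells out the step the paper dismisses as ``obvious.''
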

\begin{proof}
        We note the fact that $\mathcal{E}\otimes \mathrm{id}_{B_{[p+q]}}(\sigma_{AB_{[p+q]}})$ is a $(p,q)$-extension of $\mathcal{E}\otimes\mathrm{id}_B(\rho_{AB})$ with $\mathcal{R}_S$ being the processing on subsystem $B_S$ for $S\subseteq_q[p+q]$. Obviously, $\Tr \mathcal{E}(\sigma_A) = \Tr\mathcal{E}(\rho_A)=1$.
\end{proof}
We now specialize to the Choi states of quantum erasure channels with transmission probability less than one half and study their fractional extendibility. 
\begin{lemma}[Lower bound on the extendibility]\label{lem:low_bound}
    The state $\rho^\lambda_{AB}$ with transmission probability $\lambda = q/(p+q)$, is $(p,q)$-extendible.
\end{lemma}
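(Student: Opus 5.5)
The construction generalizes the $k$-extendibility argument sketched above. Write $\rho^\lambda_{AB}$ as a mixture: with probability $\lambda$ the reference $A$ shares an EPR pair $\phi_{AC}$ with $B$, and otherwise $B$ holds the erasure flag while $A$ is maximally mixed. I would build a $(p,q)$-extension $\sigma_{AB_1\dots B_{p+q}}$ in which the EPR pair is handed to a uniformly random one of the $p+q$ systems, together with a classical record of who received it. Concretely, each $B_i\cong C\oplus F$ is enlarged by a classical register (or we simply take $B_i\cong C\oplus F$ and let the flag carry the index information), and
\begin{align*}
\sigma_{AB_{[p+q]}} = \frac{1}{p+q}\sum_{j=1}^{p+q} \phi_{AC_j}\otimes \bigotimes_{i\neq j}\ketbra{e}{e}_{F_i}.
\end{align*}
This state is permutation-invariant over the $B_i$'s, so the construction in fact gives symmetric $(p,q)$-extendibility.

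Next, fix $S\subseteq_q[p+q]$. Tracing out $B_{[p+q]\setminus S}$ gives
\begin{align*}
\sigma_{AB_S} = \frac{1}{p+q}\sum_{j\in S}\phi_{AC_j}\otimes\bigotimes_{i\in S\setminus\{j\}}\ketbra{e}{e}_{F_i} + \frac{p}{p+q}\,\frac{\mathbb{I}_A}{d}\otimes\bigotimes_{i\in S}\ketbra{e}{e}_{F_i},
\end{align*}
where the second term uses $\Tr_{C_j}\phi_{AC_j}=\mathbb{I}_A/d$. Define $\mathcal{R}_S:B_S\to B$ as follows. First measure each $B_i$, $i\in S$, with the projective measurement $\{\Pi_C,\Pi_F\}$ onto the two orthogonal sectors; this measurement does not disturb any of the branches above. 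If some $i\in S$ is found in the $C$ sector, output that system's $C$ content into $B$ and discard the rest. If all systems are flagged, output $\ketbra{e}{e}_F$. This is a valid channel, since it is a measure-and-prepare map conditioned on orthogonal sectors, followed by partial traces.

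Applying $\mathrm{id}_A\otimes\mathcal{R}_S$ to $\sigma_{AB_S}$, each of the $q$ branches with $j\in S$ produces $\phi_{AC}$, and the remaining branch produces $\frac{\mathbb{I}_A}{d}\otimes\ketbra{e}{e}$. The result is $\frac{q}{p+q}\phi_{AC}+\frac{p}{p+q}\frac{\mathbb{I}}{d}\otimes\ketbra{e}{e}=\rho^{\lambda}_{AB}$ with $\lambda=q/(p+q)$, as required.

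The only delicate point is checking that $\mathcal{R}_S$ is a legitimate CPTP map. In particular, the sector measurement must leave the entanglement in the $C$ branch intact. This holds because the branches are supported on mutually orthogonal subspaces $C_j\otimes F^{\otimes(S\setminus j)}$ and $F^{\otimes S}$. It is also worth noting that the condition $p\geq q$ plays no role in this construction.
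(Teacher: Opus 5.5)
Your proof is correct and is essentially the paper's construction: the EPR half is placed in a uniformly random one of the $p+q$ systems, and $\mathcal{R}_S$ outputs the erasure flag exactly when every $B_i$ with $i\in S$ is flagged, which happens with probability $p/(p+q)$. The only difference is that you use the classical mixture, i.e.\ the dephasing of the paper's pure superposition $(p+q)^{-1/2}\sum_{i}\ket{\phi}^{AC_i}\ket{e}^{\otimes F_{[p+q]\setminus i}}$. This lets your decoder be a plain measure-and-select map, rather than one that must coherently map the delocalized EPR half into $C$; you should still fix a tie-breaking rule (e.g.\ smallest index) for inputs with several $B_i$ in the $C$ sector, so that $\mathcal{R}_S$ is defined on all inputs.
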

\begin{proof}
    It is easy to see that $\ket{\sigma}_{AB_{[p+q]}}= (p+q)^{-1/2} \sum_{i=1}^{p+q} \ket{\phi}^{AC_i}\ket{e}^{\otimes F_{[p+q]\setminus i}}$ is a (symmetric) $(p,q)$-extension of $\rho^\lambda_{AB}$, where $B_i \cong C_i \oplus F_i$. The processing to obtain $\rho^\lambda_{AB}$ involves detecting whether all of the $B_i$'s in a $q$-subset are in the flag state $\ket{e}$. If yes - which happens with probability $p/(q+p)$ - we raise an erasure flag. If not - which happens with probability $q/(p+q)$ - we obtain a maximally entangled state with $A$.
\end{proof}
\begin{figure}
    \centering
    \includegraphics[width=0.6\linewidth]{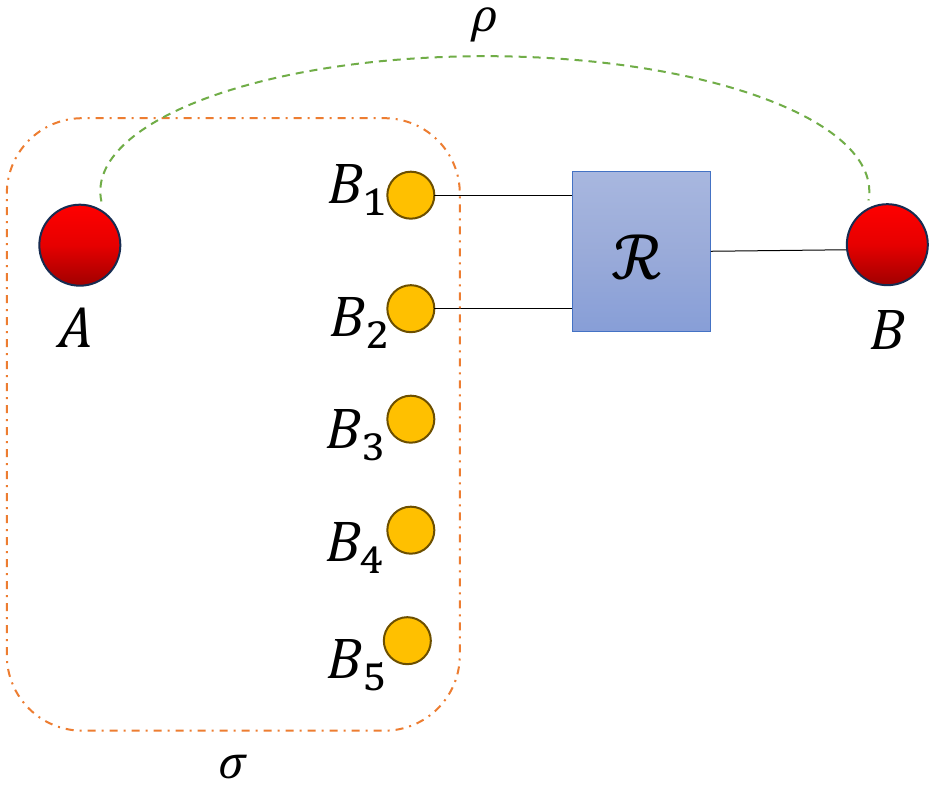}
    \caption{A $(3,2)$-extension, $\sigma_{AB_{[5]}}$. Any two $B_i$'s can be jointly processed to output the $(3,2)$-extendible state, $\rho_{AB}$.}
    \label{fig:3,2-ext}
\end{figure}
\begin{lemma}[Robust upper bound on the extendibility]\label{lem:upp_bound}
    Consider $\rho^\lambda_{AB}$ with transmission probability $\lambda = q/(p+q)$. If $p'/q'>p/q$, then $\rho^\lambda_{AB}$ is not in the closure of the set of all $(p',q')$-extendible states, denoted by $\overline{\mathrm{Ext}_{(p',q')}}$.
\end{lemma}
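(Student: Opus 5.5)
The plan is to argue by contradiction, reducing the statement to a robust form of the distributed monogamy bound, which is in turn controlled by the $\mu$-approximate Lov\'asz theta function of the Kneser graph. Write $n:=p'+q'$. Since $p'\geq q'$ we have $2q'\leq n$. Note also that $p'/q'>p/q$ is equivalent to $q'/n<q/(p+q)=\lambda$.

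\emph{Step 1: reduce to an extraction statement.} Suppose $\rho^\lambda_{AB}\in\overline{\mathrm{Ext}_{(p',q')}}$. Then for every $\epsilon>0$ there is a $(p',q')$-extendible $\tau_{AB}$ with $\frac12\norm{\tau-\rho^\lambda}_1\leq\epsilon$. Let $\sigma_{AB_{[n]}}$ be its extension and $\mathcal{R}_S$ the recovery channels, and purify $\sigma$ to $\ket{\Psi}_{AB_{[n]}E}$. On $\rho^\lambda_{AB}$ the two-outcome test ``flag absent and projection onto $\phi_{AC}$'' succeeds with probability exactly $\lambda$. Composing this test with $\mathcal{R}_S$ and dilating $\mathcal{R}_S$ to an isometry $V_S$ on $B_S$ gives a projector
\[
\Pi_S:=V_S^\dagger(\phi_{AC}\otimes\mathbb{I})V_S
\]
acting on $AB_S$ (tensored with the identity elsewhere). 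It satisfies $p_S:=\bra{\Psi}\Pi_S\ket{\Psi}\geq\lambda-\epsilon$ for every $S\subseteq_{q'}[n]$.

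\emph{Step 2: build an approximate orthonormal representation of $KG_{n,q'}$.} Use $\ket{\Psi}$ as the handle vector and set $\ket{v_S}:=\Pi_S\ket{\Psi}/\sqrt{p_S}$, so that $|\langle\Psi|v_S\rangle|^2=p_S$. For disjoint $S,T$, the isometries $V_S$ and $V_T$ act on disjoint systems, so $\Pi_S\Pi_T$ is unitarily equivalent to $(\phi_{AC}\otimes\mathbb{I})(\phi_{AC'}\otimes\mathbb{I})$, which has norm $1/d^2$. This is monogamy: $A$ cannot be maximally entangled with two disjoint systems. Hence
\[
|\langle v_S|v_T\rangle|\leq \frac{1}{d^2(\lambda-\epsilon)}=:\mu,
\]
so $\{\ket{v_S}\}$ is a $\mu$-OR of $KG_{n,q'}$, and therefore
\[
\binom{n}{q'}(\lambda-\epsilon)\leq\sum_S p_S\leq\vartheta_\mu(KG_{n,q'}).
\]
By Lemma~\ref{lem:lovasz-approx} and \eqref{eq:lovasz-Kneser}, $\vartheta_\mu\to\binom{n-1}{q'-1}$ as $\mu\to0$. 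In that limit the inequality reads $\lambda\leq q'/n$, contradicting $q'/n<\lambda$.

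\emph{Step 3, the main obstacle: making $\mu$ small.} The bound on $\mu$ in Step 2 is a constant fixed by $d$, not something that tends to zero, so the argument needs amplification. I would first try tensor powers. By Corollary~\ref{lemma:tensor_inv}, $\tau^{\otimes m}$ is $(p',q')$-extendible, with extension $\sigma^{\otimes m}$ and recoveries $\mathcal{R}_S^{\otimes m}$, and it is within $m\epsilon$ of $(\rho^\lambda)^{\otimes m}$. On $(\rho^\lambda)^{\otimes m}$, replace the single-copy test by a test that first reads off the erasure pattern $J\subseteq[m]$ and then projects onto $\phi^{\otimes J}$. This produces a maximally entangled state of dimension $d^{|J|}$, and $|J|\approx\lambda m$ concentrates. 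The pairwise overlap bound for disjoint $S,T$ then becomes roughly $d^{-2|J|}/p_S$.

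The plan is to decompose by erasure pattern: one projector $\Pi_{S,J}$ per pattern $J$, with the graph on pairs $(S,J)$, or alternatively a per-copy averaging. The goal is that the ``same-$J$'' contributions give $\mu\to0$ while the sum still averages to $\lambda$ per subset. Failing that, I would use the exact formula for $\vartheta_\mu(KG_{n,k})$ from the Supplemental Material directly. The aim would be to show that $\vartheta_\mu(KG_{n,q'})/\binom{n}{q'}<\lambda$ for some explicit $\mu>0$ depending on the gap $\lambda-q'/n$, reaching that $\mu$ via the $m$-fold amplification with $d^{-2\lambda m}\to0$. Sending $\epsilon\to0$ (with $m\epsilon\to0$) then closes the contradiction.
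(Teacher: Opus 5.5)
Steps 1--2 are the right skeleton, and they match the paper's reduction: a handle $\ket{\Psi}$, vectors $\ket{v_S}$ forming an approximate orthonormal representation of $KG_{p'+q',q'}$, then Lemma~\ref{lem:lovasz-approx} with \eqref{eq:lovasz-Kneser}. But the proof has a genuine gap exactly where you suspect: nothing you establish makes $\mu\to0$.

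Your Step 2 overlap bound is state-independent, and it is also off by a square. For $P=\phi_{AC}\otimes\mathbb{I}$ and $Q=\phi_{AC'}\otimes\mathbb{I}$ one has $PQP=P/d^2$, so $\norm{PQ}=1/d$, not $1/d^2$. With the corrected norm your $\mu$ is $1/(d(\lambda-\epsilon))$, which is $\geq 1$ for $d=2$ (recall $\lambda\le 1/2$ since $p\ge q$), so it is vacuous. Even the sharper Cauchy--Schwarz bound $|\langle v_S|v_T\rangle|\le 1/d$ is only a fixed constant.

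Step 3 is a plan, not an argument, and the erasure-pattern amplification does not obviously work. The norm bound for $\Pi_{S,J}\Pi_{T,J'}$ is $d^{-|J\cap J'|}$, which says nothing when $J\cap J'=\emptyset$. Such pairs are not negligible: in $m$ copies of the extension from Lemma~\ref{lem:low_bound}, disjoint $S,T$ \emph{always} see disjoint non-erasure patterns. So the constraint graph on pairs $(S,J)$ is not Kneser-like, and you would need a new theta bound for it, which you do not supply. Your fallback via the exact formula for $\vartheta_\mu(KG_{n,k})$ rests on the same unproven amplification.

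The missing idea is to bound the pairwise overlap using the $\epsilon$-closeness itself. That makes it $O(\epsilon)$ at the single-copy level, so no amplification is needed. Use the \emph{non-erasure} projector $M_S=V_S^\dagger(\Pi^S_C\otimes\mathbb{I})V_S$ instead of the projector onto $\phi_{AC}$. It acts only on $B_S$ and its ancilla, so $M_S$ and $M_T$ commute for disjoint $S,T$, and you still have $p_S\ge\lambda-\epsilon$. Moreover $\langle v_T|v_S\rangle=r_{ST}/\sqrt{p_Sp_T}$, where $r_{ST}=\bra{\Psi}M_SM_T\ket{\Psi}$ is the joint non-erasure probability.

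Closeness to $\rho^\lambda$ gives $\mathrm{Tr}[(\mathbb{I}_A\otimes\Pi^S_C-\Phi^S_{AC})\,\omega_{AD_S}]\le\epsilon$. Restrict this to the event that $T$ is also non-erased and add the version with $S,T$ swapped. With $P_S=\Phi^S_{AC}\otimes\Pi^T_C$ and $P_T=\Phi^T_{AC}\otimes\Pi^S_C$, use $\norm{P_S+P_T}=1+\norm{P_SP_T}=1+1/d$. This gives $2r_{ST}\le 2\epsilon+(1+1/d)\,r_{ST}$, i.e.\ $r_{ST}\le 2\epsilon/(1-1/d)$. Hence $\mu=2\epsilon/[(1-1/d)(\lambda-\epsilon)]\to 0$, and your Step 2 conclusion $\lambda\le q'/(p'+q')$ follows as written.

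You could even keep your $\Pi_S$, since $\norm{(M_S-\Pi_S)\ket{\Psi}}^2\le\epsilon$ and $0\le\bra{\Psi}\Pi_S M_T\ket{\Psi}\le r_{ST}$. You would still need this robust-monogamy bound on $r_{ST}$.
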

Lemma \ref{lem:upp_bound} is a direct result of the following proposition.  Informally, it states: for any multipartite quantum state on $AB_1B_2\dots B_n$, the maximum average probability with which processing on any $k$ subsystems among $\{B_i\}$ outputs a maximally entangled state with $A$ is the fraction $k/n$ whenever $n\geq 2k$. This is exactly the fraction one obtains by giving half of the maximally entangled state to each of the $n$ parties with equal probability. Hence, error correction is of no benefit in this regime. 
\begin{prop}[Distributed monogamy of entanglement]\label{prop:ent-frac}
    Fix a state $\sigma_{AB_1B_2\dots B_n}$, an integer, $k \leq n/2$, and a collection $\{\lambda_S\in [0,1]\}_{S\subseteq_k [n]}$. If, for some $\epsilon \geq 0$, there exists a set of quantum channels, $\{\mathcal{R}_S:B_S\to D_S\cong C\oplus F\}_{S\subseteq_k [n]}$
    such that for every $S\subseteq_k [n]$,
    $\frac{1}{2}\norm{\mathrm{id}_A\otimes \mathcal{R}_S(\sigma_{AB_S}) - \rho^{\lambda_S}_{AD_S}}_1 \leq \epsilon$, then ${n \choose k}^{-1}\sum_{S\subseteq_k [n]} \lambda_S \leq k/n + \Delta_\epsilon$, where $\lim_{\epsilon\to 0} \Delta_\epsilon = 0$. 
\end{prop}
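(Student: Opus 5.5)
The plan is to encode the protocol as a $\mu$-orthonormal representation of the Kneser graph $KG_{n,k}$ and then invoke the Lov\'asz bound \eqref{eq:lovasz-Kneser} together with Lemma~\ref{lem:lovasz-approx}.

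First, purify $\sigma$ to $\ket{\sigma}_{AB_1\dots B_nE}$ and Stinespring-dilate each $\mathcal{R}_S$ to an isometry $V_S: B_S\to D_S G_S$. On the output, let $P_S$ be the projector onto the $C$ (non-erased) subspace of $D_S$, and let $\Phi_S$ be the projector onto $\phi_{AC}$ tensored with identity elsewhere. For the ideal state $\rho^{\lambda_S}_{AD_S}$ we have $\Tr[\Phi_S\rho^{\lambda_S}]=\lambda_S$. The closeness hypothesis then gives $\Tr[\Phi_S(\mathrm{id}_A\otimes\mathcal{R}_S)(\sigma_{AB_S})]\ge \lambda_S-\epsilon$.

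Define the vectors
\[
\ket{w_S}:=(V_S^\dagger \Phi_S V_S)\ket{\sigma},
\]
with $V_S^\dagger\Phi_S V_S$ acting on $AB_S$ and trivially elsewhere. Then $\langle\sigma|w_S\rangle=\norm{w_S}^2\ge \lambda_S-\epsilon$. Set $\ket{v_S}=\ket{w_S}/\norm{w_S}$ and take $\ket{\psi}=\ket{\sigma}$ as the handle. This gives
\[
|\langle\psi|v_S\rangle|^2=\norm{w_S}^2\ge\lambda_S-\epsilon,
\]
so $\sum_S(\lambda_S-\epsilon)\le \vartheta_\mu(KG_{n,k})$, provided $\{\ket{v_S}\}$ is a $\mu$-OR for some $\mu=\mu(\epsilon)\to0$. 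Lemma~\ref{lem:lovasz-approx} and \eqref{eq:lovasz-Kneser} then give $\vartheta_\mu(KG_{n,k})\to\binom{n-1}{k-1}$. Dividing by $\binom nk$ yields the bound $k/n+\Delta_\epsilon$. Sets $S$ with tiny $\lambda_S$ (where the normalization $\norm{w_S}$ is problematic) are handled by dropping them or replacing $\lambda_S$ by $\max(\lambda_S,2\epsilon)$-type thresholds; this costs only $O(\epsilon)$ in the average.

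The main obstacle is showing $|\langle v_S|v_T\rangle|\le\mu(\epsilon)$ for disjoint $S,T$. In that case $V_S^\dagger\Phi_SV_S$ and $V_T^\dagger\Phi_TV_T$ act on disjoint $B$-systems but share $A$, so the overlap is
\[
\bra{\sigma}\Pi_S\Pi_T\ket{\sigma}
\]
with both projectors "capturing the EPR partner of $A$". The intuition is monogamy: if $\lambda_S,\lambda_T$ are both large, the marginal on $A B_S B_T$ would contain near-perfect entanglement of $A$ with two disjoint systems, which is impossible. Quantitatively, I would first exploit that, conditioned on success, the $AC$ state is $\epsilon$-close to $\phi$ by the hypothesis (the ideal $\rho^{\lambda}$ has exactly $\phi$ on the non-erased branch, and the $C/F$ branch structure is exactly matched up to $\epsilon$). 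Decoupling then forces the post-measurement state $\ket{v_S}$ to have $A$ nearly maximally entangled with $D_S$ and hence nearly product with $B_TE$, and similarly for $\ket{v_T}$. For such states, the overlap through $A$ is bounded by roughly $1/d+O(\sqrt\epsilon/\lambda)$ via the standard estimate $|\bra{\phi_{AC}}\otimes\bra{x}\,(\ket{y}\otimes\ket{\phi_{AC'}})|\le 1/d$. If the $1/d$ term does not vanish, I would remove it by tensoring with many copies, using Corollary~\ref{lemma:tensor_inv}-style amplification: replace $\phi$ by $\phi^{\otimes m}$ so the overlap is $d^{-m}$. Alternatively I would project onto the orthogonal complement of $\ket{\sigma}$'s symmetric part. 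Making this overlap estimate rigorous and uniform in $\lambda_S$ is where I expect the real work to lie; everything after it is the Kneser computation.
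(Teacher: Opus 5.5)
\textbf{Verdict: there is a genuine gap.} Your overall architecture is essentially the paper's: purify, dilate, project $\ket{\sigma}$ onto each subset's success branch, read the normalized vectors as a $\mu$-OR of $KG_{n,k}$ with handle $\ket{\sigma}$, and finish with Lemma~\ref{lem:lovasz-approx} and Eq.~\eqref{eq:lovasz-Kneser}. However, the step that carries the content, $|\langle v_S|v_T\rangle|\le\mu(\epsilon)\to 0$ for disjoint $S,T$, is left open, and the route you sketch cannot close it.

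\textbf{Why the sketched route fails.} Knowing only that each vector has $A$ (nearly) maximally entangled with a partner disjoint from the other's caps the overlap at $1/d$. That cap is attained: $\ket{\phi}_{AC}\ket{x}_{C'}$ and $\ket{x}_{C}\ket{\phi}_{AC'}$ have inner product exactly $1/d$. With $\mu=1/d$ the Kneser bound only gives an average of $\lambda_S$ at most $k/n+(n-k)/(nd)$, up to $O(\epsilon)$. Neither proposed repair helps.
\begin{itemize}
\item Running on $m$ copies changes the hypothesis. The target $(\rho^{\lambda_S})^{\otimes m}$ is not an erasure Choi state with transmission $\lambda_S$; its all-success probability is $\lambda_S^m$, so you would bound the average of $\lambda_S^m$, not of $\lambda_S$.
\item ``Projecting out the symmetric part'' has no meaning for an arbitrary $\sigma$.
\item Your threshold of order $\epsilon$ is too small. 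The overlap bound degrades at best like $\epsilon/\sqrt{(\lambda_S-\epsilon)(\lambda_T-\epsilon)}$, which is $O(1)$ when $\lambda_S\sim 2\epsilon$. You need $\tau\to0$ with $\epsilon/\tau\to0$; the paper uses $\tau=\sqrt{\epsilon}$ and replaces discarded vertices by fresh orthonormal vectors. The cost is then $O(\sqrt{\epsilon})$, not $O(\epsilon)$.
\end{itemize}

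\textbf{The missing idea.} Bound the \emph{joint non-erasure probability} of two disjoint subsets, rather than an EPR overlap.
\begin{itemize}
\item Use a unitary dilation with ancillas in a fixed state. For a bare isometry, $V_S^\dagger\Phi_SV_S$ is not a projector, so your identity $\langle\sigma|w_S\rangle=\norm{w_S}^2$ fails as written.
\item Project onto the non-erased subspace: $M_S=U_S^\dagger(\Pi^S_C\otimes\mathbb{I})U_S$. For disjoint $S,T$ these act on disjoint systems and commute. Hence $r_{ST}=\bra{\tilde\sigma}M_SM_T\ket{\tilde\sigma}=\sqrt{p_Sp_T}\,\langle v_T|v_S\rangle$ is a genuine probability that both subsets succeed.
\item Because the ideal state carries exactly $\phi$ on its non-erased branch, $\mathrm{Tr}[(\mathbb{I}_A\otimes\Pi^S_C-\Phi^S_{AC})\,\omega_{AD_S}]\le\epsilon$. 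Multiply by $\Pi^T_C\le\mathbb{I}$ on $D_T$, do the same with $S\leftrightarrow T$, and add. With $P_S=\Phi^S_{AC}\otimes\Pi^T_C$ and $P_T=\Phi^T_{AC}\otimes\Pi^S_C$ this gives $2r_{ST}\le 2\epsilon+\mathrm{Tr}[(P_S+P_T)\,\omega_{AD_SD_T}]\le 2\epsilon+\norm{P_S+P_T}\,r_{ST}$.
\item Monogamy enters once, through $\norm{P_S+P_T}=1+\norm{P_SP_T}=1+1/d$. This is your $1/d$, but now it sits inside a coefficient strictly below $2$ multiplying $r_{ST}$, rather than being the overlap bound itself.
\item Therefore $r_{ST}\le 2\epsilon/(1-1/d)$, and $|\langle v_S|v_T\rangle|\le r_{ST}/\sqrt{(\lambda_S-\epsilon)(\lambda_T-\epsilon)}\to0$.
\end{itemize}

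Your EPR-projected vectors could be rescued afterwards. Since $\norm{(M_S-U_S^\dagger\Phi^S_{AC}U_S)\ket{\tilde\sigma}}^2\le\epsilon$, their unnormalized overlaps lie within $2\sqrt{\epsilon}$ of $r_{ST}$. But that still requires exactly the estimate on $r_{ST}$ that your proposal lacks.
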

\begin{proof}
We first present the argument in the special case needed below, where $\lambda_S>\epsilon$  for all $S$.
The general case of arbitrary $\lambda_S\in [0,1]$ is obtained by a threshold argument which is given in the supplemental material.

    We use the purification $\ket{\sigma}_{AB_{[n]} E}$. For all $k$-subsets, $S\subseteq_k [n]$, denote the unitary dilation of the channel, $\mathcal{R}_S:B_S\to D_S$ by $U_S: B_SE_S\to D_S E_S$. Here we have introduced ancillary systems of sufficiently large dimension, $E_i$, prepared in a pure state, denoted by $\ket{0}_{E_i}$, associated with each $B_i$. Therefore, the composite pure state of all the components can be written as
    $\ket{\tilde{\sigma}}_{AB_{[n]}E_{[n]}E} := \ket{\sigma}_{AB_{[n]}E}\ket{0}_{E_1}\ket{0}_{E_2}\dots \ket{0}_{E_n}$. After we process the $k$-subset, $S$, the pure state is given by $U_S\ket{\tilde{\sigma}}$. Here we implicitly assume a tensor product with the identity operator on the rest of the systems. Clearly, $\omega_{AD_S}:= \mathrm{id}_A\otimes \mathcal{R}_S(\sigma_{AB_S}) = \Tr_{E_{[n]}B_{S^c}E}{(U_S\ketbra{\tilde{\sigma}}{\tilde{\sigma}}U^\dagger_S)}$.

    We define the projector $ \Pi^S_C$ on $C$, the $d$-dimensional subspace of $D_S$ that is isomorphic to $A$. The probability with which $\omega_{AD_S}$ is detected in the subspace $AC$ is given by  $p_S = \Tr\left\{\Pi^S_C\omega_{AD_S}\right\} = \bra{\tilde{\sigma}} M_S \ket{\tilde{\sigma}}$, where $M_S = U_S^\dagger (\Pi^S_C\otimes \mathbb{I}_{E_S}) U_S$. As $U_S$ is a unitary and $\Pi^S_C$ is a projector, we conclude that $M_S$ is a projector: $M^\dagger_S = M_S$ and $M_S^2 = M_S$. Therefore, given the state $\ket{\tilde{\sigma}}$,Thus $M_S$ represents the non-erasure outcome of the
processing associated with $S$. Let the post-measurement normalised state be denoted by $\ket{v_S}:= \frac{M_S\ket{\tilde{\sigma}}}{\sqrt{\bra{\tilde{\sigma}} M_S \ket{\tilde{\sigma}}}} = \frac{M_S\ket{\tilde{\sigma}}}{\sqrt{p_S}}$.

    Furthermore, for any two disjoint $k$-subsets, $S$ and $T$, the joint probability that both $S$ and $T$ measure their processed $D$ systems in the respective $C$ subspaces is given by
    \begin{align}
        r_{ST} &= \Tr{\left(\mathbb{I}_A\otimes\Pi^S_C\otimes \Pi_C^T\right) U_SU_T \ketbra{\tilde{\sigma}}{\tilde{\sigma}}U_S^\dagger U_T^\dagger} \nonumber\\
        &= \bra{\tilde{\sigma}}U^\dagger_S \Pi^S_C U_S\otimes U^\dagger_T \Pi^T_C U_T \ket{\tilde{\sigma}}\nonumber\\
        &= \bra{\tilde{\sigma}}M_S M_T\ket{\tilde{\sigma}} = \sqrt{p_Sp_T} \langle v_T|v_S\rangle. \label{eq:overlap}
    \end{align}
    When $\epsilon = 0$, $r_{ST}$ is the probability with which both systems $D_S$ and $D_T$ are maximally entangled with $A$, hence $r_{ST}$ must vanish due to the monogamy of entanglement. More generally, we prove that for any two disjoint subsets $S$ and $T$, the overlap $\langle v_T|v_S\rangle$ is bounded as follows 
\begin{align}\label{lem:overlap}
    |\langle v_T|v_S\rangle| &\leq \frac{2\epsilon}{\left[(1-1/d)\sqrt{(\lambda_S-\epsilon)(\lambda_T-\epsilon)}\right]}=: f_{S,T}(\epsilon).
\end{align}
Since the collection \(\{\lambda_S\}_{S\subseteq_k[n]}\) is
fixed and finite, and $\lambda_S>0$ for every $S$,
$f(\epsilon):=\max_{S\cap T=\varnothing}f_{S,T}(\epsilon)
\to 0$ as $\epsilon\to0$. The proof is included in the End Matter.

    Now the average probability with which a randomly chosen $k$-subset, $S$, processed with $\mathcal{R}_S$, outputs a state in the non-erasure subspace is given by 
    \begin{align}
        \mathcal{P}_{ave} &=  {n\choose k}^{-1}\sum_{S\subseteq_k [n]} p_S\geq \left[{n\choose k}^{-1}\sum_{S\subseteq_k [n]} \lambda_S\right] - \epsilon.
    \end{align} 
    Here we have used Eq.~\eqref{eq:prob}. Furthermore, the probabilities are given by $p_S = |\langle \tilde{\sigma}|v_S\rangle|^2$. Therefore, we can write
    \begin{align}
        \mathcal{P}_{ave} = \frac{1}{{n\choose k}} \left(\sum_{S\subseteq_k [n]} |\langle \tilde{\sigma}|v_S\rangle|^2\right).
    \end{align}
    Now we note that $\{\ket{v_S}: S\subseteq_k [n]\}$ forms an $f(\epsilon)$-orthonormal representation of the Kneser graph, $KG_{n,k}$, where $f(\epsilon)=\max_{S,T}\{f_{S,T}(\epsilon)\}$. In fact, the term in the parentheses above is the objective function for the maximization problem in Eq.~\eqref{eq:lovasz-theta}, corresponding to a feasible solution. Hence, we obtain  $\mathcal{P}_{ave} \leq {n \choose k}^{-1}\vartheta_{f(\epsilon)}(KG_{n,k}) =k/n + {n \choose k}^{-1}\left[\vartheta_{f(\epsilon)}(KG_{n,k}) -\vartheta(KG_{n,k})\right]$, where we have used Eq.~\eqref{eq:lovasz-Kneser}. Combining everything, we finally get
    \begin{align}\label{eq:lam-av}
        \frac{\sum_{S\subseteq_k [n]} \lambda_S }{{n \choose k}}\leq \frac{k}{n} + \epsilon + \frac{\left[\vartheta_{f(\epsilon)}(KG_{n,k}) -\vartheta(KG_{n,k})\right]}{{n \choose k}}. 
    \end{align}
    By Lemma~\ref{lem:lovasz-approx}, we argue that $\Delta_\epsilon:= \epsilon + {n \choose k}^{-1}\left[\vartheta_{f(\epsilon)}(KG_{n,k}) -\vartheta(KG_{n,k})\right]$ approaches $0$ as $\epsilon\to 0$.
\end{proof}
\textit{Proof of Lemma~\ref{lem:upp_bound}:} Suppose that for $\lambda = q/{(p+q)}$, $\rho^\lambda_{AB} \in \overline{\mathrm{Ext}_{(p',q')}}$ with $p'/q' > p/q$. Hence, for all $\epsilon>0$ there exists a $(p',q')$-extendible state, $\omega_{AB}$ such that $\frac{1}{2}\norm{\omega_{AB} - \rho^\lambda_{AB}}_1 < \epsilon$. From Definition~\ref{def:fract-ext} and Proposition~\ref{prop:ent-frac}, we conclude that $\lambda \leq q'/{(p'+q')} + \mu$ for all $\mu>0$. This is in contradiction to the fact that $\lambda = q/{(p+q)} > q'/{(p'+q')}$. \qed

Now we have acquired all the necessary components to prove Hastings' conjecture on channel simulation ~\cite{Hastings2008}.
\begin{theorem}[Hastings' conjecture]\label{thm:hastings}
    $\mathcal{N}_{\gamma}$ cannot simulate $\mathcal{N}_{\lambda}$ where $\gamma < \lambda \leq \frac{1}{2}$.
\end{theorem}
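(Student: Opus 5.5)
We argue by contradiction, placing a rational number strictly between $\gamma$ and $\lambda$. Given $\gamma<\lambda\le\frac12$, choose integers $p'\ge q'\ge 1$ with $\gamma\le q'/(p'+q')<\lambda$; the condition $p'\ge q'$ is possible because $\lambda\le\frac12$. The idea is to show that $\mathcal{N}_\gamma$ is $(p',q')$-extendible and that this property survives the simulation, whereas $\mathcal{N}_\lambda$ lies outside the closure of the $(p',q')$-extendible states.

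\textbf{Step 1: $\rho^\gamma$ is $(p',q')$-extendible.} By Lemma~\ref{lem:low_bound}, the Choi state $\rho^{\mu}$ with $\mu=q'/(p'+q')$ is $(p',q')$-extendible. The channel $\mathcal{N}_\gamma$ equals $\mathcal{N}_\mu$ followed by a further erasure applied with probability $1-\gamma/\mu$. This extra step is a post-processing channel on $B$, so Lemma~\ref{lem:post-process} shows that $\rho^\gamma_{AB}$ is $(p',q')$-extendible.

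\textbf{Step 2: the simulation outputs are $(p',q')$-extendible.} Corollary~\ref{lemma:tensor_inv} makes $(\rho^{\gamma})^{\otimes n}$, the Choi state of $\mathcal{N}_\gamma^{\otimes n}$, $(p',q')$-extendible. Now write the Choi state of $\mathcal{D}_n\circ\mathcal{N}_\gamma^{\otimes n}\circ\mathcal{E}_n$ as a pre-processing of this Choi state. Via the transpose trick, $\mathrm{id}_R\otimes \mathcal{N}^{\otimes n}\circ\mathcal{E}_n(\phi_{RA'})$ is obtained from $\phi_{R^nA^n}$ by a CP map acting on the reference $R^n$ alone. Concretely, this map is the transpose-Kraus action of $\mathcal{E}_n$ followed by rescaling by $d_A^n/d_{A'}$.

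This is the step I expect to need the most care. The pre-processing map is trace non-increasing only up to normalization, so Lemma~\ref{lem:pre_process} must be applied in its stated form: a CP map $\mathcal{E}$ with $\Tr\mathcal{E}(\rho_A)=1$. I will check that this normalization holds, using trace preservation of $\mathcal{E}_n$ together with the maximally mixed marginal. Pre-processing on the reference via Lemma~\ref{lem:pre_process}, and post-processing by $\mathcal{D}_n$ via Lemma~\ref{lem:post-process}, then show that every simulated Choi state $\mathcal{J}_{\mathcal{D}_n\circ\mathcal{N}_\gamma^{\otimes n}\circ\mathcal{E}_n}$ is $(p',q')$-extendible, with fixed finite input and output dimensions.

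\textbf{Step 3: contradiction.} Successful simulation means these Choi states converge in trace distance to $\rho^\lambda_{AB}$. Hence $\rho^\lambda\in\overline{\mathrm{Ext}_{(p',q')}}$.

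Suppose first that $\lambda$ is rational, $\lambda=q/(p+q)$. Then $q'/(p'+q')<\lambda$ means $p'/q'>p/q$, so Lemma~\ref{lem:upp_bound} forbids this membership.

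If $\lambda$ is irrational, Lemma~\ref{lem:upp_bound} does not apply directly. Instead I apply Proposition~\ref{prop:ent-frac} directly, as in the proof of Lemma~\ref{lem:upp_bound}. Take $\sigma$ to be the $(p',q')$-extension of an $\epsilon$-close extendible state, with $n=p'+q'$, $k=q'$, and every $\lambda_S=\lambda$. The proposition gives $\lambda\le q'/(p'+q')+\Delta_\epsilon$ for every $\epsilon>0$, hence $\lambda\le q'/(p'+q')$, contradicting the choice of $q'/(p'+q')<\lambda$.

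One check is needed here: the recovery maps give states $\epsilon$-close to $\rho^\lambda$ in the required form with output $C\oplus F$, which holds since the recovery outputs live on $B\cong C\oplus F$. With that, the contradiction completes the proof.
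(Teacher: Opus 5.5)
Your proposal is correct and follows the paper's proof: $(p',q')$-extendibility of the resource comes from Lemma~\ref{lem:low_bound}, is carried through tensor powers, post-processing by $\mathcal{D}_n$ and transpose-trick pre-processing by $\mathcal{E}_n$, and is then contradicted by the closure statement of Lemma~\ref{lem:upp_bound} (equivalently Proposition~\ref{prop:ent-frac}). The only difference is how you handle irrational parameters: instead of the paper's brief ``two rationals in between'' reduction, you interpolate a single rational $q'/(p'+q')\in[\gamma,\lambda)$ and degrade $\mathcal{N}_{q'/(p'+q')}$ to $\mathcal{N}_\gamma$ by an extra erasure (as the paper itself does in its quantitative supplement), and you apply the proposition directly to irrational $\lambda$ --- this is fully rigorous, and your rational/irrational split for $\lambda$ is in fact unnecessary, since the proof of Lemma~\ref{lem:upp_bound} never uses the rationality of $\lambda$.
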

\begin{proof}
We first prove the case for rational numbers $\gamma = q/(p+q)$ and $\lambda = s/(r+s)$. As $\gamma < \lambda \leq \frac{1}{2}$, we have that $p/q>r/s\geq 1$. Therefore, using Lemma~\ref{lem:upp_bound}, we conclude that $\rho^{\lambda}_{AB}\notin \overline{\mathrm{Ext}_{(p,q)}}$.

The Choi state of $n$ copies of an erasure channel, $\mathcal{N}_\gamma$, is given by $\tilde{\rho}_{A^nB^n} := (\rho^\gamma_{AB})^{\otimes n} = \mathcal{N}_{\gamma}^{\otimes n}(\ketbra{\phi}{\phi}^{\otimes n}_{AA'}) \equiv \mathcal{N}_{\gamma}^{\otimes n}(\ketbra{\phi}{\phi}_{A^nA'^n})$, where $A^n= A_{[n]}$, $B^n=B_{[n]}$ and $A'^n = A'_{[n]}$. Additionally, for any encoder $\mathcal{E}_n: R\to A^n$, denote by $\mathcal{E}_n^T$ a CP unital map with Kraus operators being the transpose of the Kraus operators of $\mathcal{E}_n$ in some chosen decomposition. 

As $\rho^{\gamma}_{AB}$ is $(p,q)$-extendible (Lemma~\ref{lem:low_bound}), so is $\tilde{\rho}_{A^nB^n}$ (Corollary~\ref{lemma:tensor_inv}). As $\tilde{\rho}_{A^nB^n}$ is $(p,q)$-extendible, so is $\mathrm{id}_{A^n} \otimes \mathcal{D}_n(\tilde{\rho}_{A^nB^n})$ for any decoder, $\mathcal{D}_n: B^n\to B$ (Lemma~\ref{lem:post-process}). As $\mathrm{id}_{A^n} \otimes \mathcal{D}_n(\tilde{\rho}_{A^nB^n})$ is $(p,q)$-extendible, so is $ \omega^{(n)}_{RB} := \frac{(\dim A)^n}{\dim R}\mathcal{E}_n^T \otimes \mathcal{D}_n(\tilde{\rho}_{A^nB^n})$, for any encoder $\mathcal{E}_n: R\to A^n$ (Lemma~\ref{lem:pre_process}).

Using the transpose trick~\footnote{For a $d_1\times d_2$ matrix $A$, we have $(A\otimes \mathbb{I}_{d_2}) \ket{\beta_{d_2}} = (\mathbb{I}_{d_1}\otimes A^T)\ket{\beta_{d_1}}$. Here $\ket{\beta_n} = \sum_{i=1}^n \ket{ii}$ is the unnormalized maximally entangled state on $\mathbb{C}^n\otimes\mathbb{C}^n$. The factor of $\frac{(\dim A)^n}{\dim R}$ in the proof takes care of the normalization.}, we have $\omega^{(n)}_{RB} = \mathrm{id}_R\otimes (\mathcal{D}_n\circ\mathcal{N}_{\gamma}^{\otimes n}\circ\mathcal{E}_n)(\ketbra{\phi}{\phi}_{RR'}) = \mathcal{J}_{\mathcal{D}_n\circ\mathcal{N}_{\gamma}^{\otimes n}\circ\mathcal{E}_n}$. Recall from the definition of channel simulation that if $\mathcal{N}_{\gamma}$ could simulate $\mathcal{N}_{\lambda}$, we would have a sequence of $(p,q)$-extendible states, $\{\omega^{(n)}_{RB}\}$ converging to $\rho^{\lambda}_{AB}$ which is not in $\overline{\mathrm{Ext}_{(p,q)}}$, arriving at a contradiction. Thus the claim is proven for rational $\gamma$ and $\lambda$.

For irrational $\gamma$ and $\lambda$, note that in between any two irrational numbers, there exist two rational numbers. Hence, such a simulation could be used to contradict what we have just proven above. This finishes the proof.
\end{proof}

Additionally, we have calculated a strictly positive lower bound for the asymptotic simulation error, i.e., the trace distance between the simulated channel and the target channel. We present the dimension-independent bound here in the main text. The dimension-dependent bound and its derivation can be found in the supplemental material.
\begin{theorem}[Quantitative Hastings' Conjecture]
    Let $\gamma < \lambda \leq \frac{1}{2}$. For every sequence of encoder--decoder pairs, $\{(\mathcal{E}_n,\mathcal{D}_n)\}$, we have
    \begin{align*}
       \liminf_{n\to \infty} &\frac{1}{2}\norm{\mathcal{J}_{\mathcal{N}_\lambda}- \mathcal{J}_{\mathcal{D}_n\circ\mathcal{N}_{\gamma}^{\otimes n}\circ\mathcal{E}_n}}_1\nonumber\\
       &\geq \lambda - \frac{1}{2}\left(\sqrt{
            \left(4-5\gamma\right)^2+ 16\lambda(1-\gamma)} -  \left(4-5\gamma\right) \right)
    \end{align*}
\end{theorem}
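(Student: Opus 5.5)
The plan is to rerun the proof of Theorem~\ref{thm:hastings} while tracking every error term quantitatively instead of only sending it to zero. Let $\epsilon_n := \frac{1}{2}\norm{\mathcal{J}_{\mathcal{N}_\lambda}- \mathcal{J}_{\mathcal{D}_n\circ\mathcal{N}_{\gamma}^{\otimes n}\circ\mathcal{E}_n}}_1$ and $\epsilon := \liminf_n \epsilon_n$.

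First, fix rationals $q/(p+q)$ with $q/(p+q)\ge\gamma$ arbitrarily close to $\gamma$. For the irrational case, observe that $\mathcal{N}_\gamma$ is a degradation of $\mathcal{N}_{\gamma'}$ for $\gamma'\ge\gamma$ (further erasure), so simulating with $\mathcal{N}_{\gamma'}$ is at least as easy. Exactly as in the proof of Theorem~\ref{thm:hastings}, Lemma~\ref{lem:low_bound}, Corollary~\ref{lemma:tensor_inv}, and Lemmas~\ref{lem:post-process} and~\ref{lem:pre_process} show that $\omega^{(n)}_{RB}=\mathcal{J}_{\mathcal{D}_n\circ\mathcal{N}_{\gamma}^{\otimes n}\circ\mathcal{E}_n}$ is $(p,q)$-extendible with some extension $\sigma_{RB_{[p+q]}}$ and recovery maps $\mathcal{R}_S$. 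Then, for every $S\subseteq_q[p+q]$, $\mathrm{id}\otimes\mathcal{R}_S(\sigma_{RB_S})$ is $\epsilon_n$-close to $\rho^\lambda$. This is precisely the hypothesis of Proposition~\ref{prop:ent-frac} with $n\to p+q$, $k\to q$ (legitimate since $\gamma\le 1/2$ gives $q\le (p+q)/2$), and $\lambda_S=\lambda$ for all $S$. We may assume $\lambda>\epsilon_n$, since otherwise the claimed bound is trivial.

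Next, rather than invoking Lemma~\ref{lem:lovasz-approx}, I would use the explicit value of $\vartheta_\mu(KG_{p+q,q})$ from the Supplemental Material, or a clean upper bound on it. I expect this to take the form $\frac{k}{n}+ c\,\mu(1-\frac{k}{n})$-type growth in the overlap bound $\mu$, obtained from the symmetric (Johnson-scheme) eigenvalue analysis of the relaxed SDP. For the overlap I would use Eq.~\eqref{lem:overlap} with $\lambda_S=\lambda_T=\lambda$, where the dimension-independent form comes from bounding $1-1/d\ge 1/2$. This gives $\mu\le 4\epsilon/(\lambda-\epsilon)$. Plugging into Eq.~\eqref{eq:lam-av}, with $\gamma$ in place of $q/(p+q)$ after the rational approximation limit, yields an inequality of the form
\begin{align*}
\lambda-\epsilon \le \gamma + (1-\gamma)\,\frac{4\epsilon}{\lambda-\epsilon}\cdot(\text{constant}),
\end{align*}
possibly with an extra $-\gamma\epsilon$-type correction coming from $p_S$ versus $\lambda_S$. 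Writing $t=\lambda-\epsilon$ and clearing the denominator should give a quadratic inequality $t^2+(4-5\gamma)t-4\lambda(1-\gamma)\le 0$. This is exactly the quadratic whose positive root is $\frac{1}{2}\bigl(\sqrt{(4-5\gamma)^2+16\lambda(1-\gamma)}-(4-5\gamma)\bigr)$, so $\epsilon\ge\lambda - t_{+}$. Passing $n\to\infty$ along a subsequence realising the $\liminf$, and using continuity of all the bounds in $\epsilon_n$, finishes the proof.

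The main obstacle is the middle step: obtaining an explicit, dimension-free upper bound on $\vartheta_\mu(KG_{n,k})$ whose dependence on $\mu$ is uniform in $n,k$, so that the rational approximation $q/(p+q)\to\gamma$ can be taken without blowing up. If the exact Supplemental formula depends on $n,k$, I would first try to bound it, using monotonicity in $k/n$, by its value in terms of the ratio $k/n\approx\gamma$ alone. I would then track the constants, namely the factor $2\epsilon$ in Eq.~\eqref{lem:overlap}, the $(1-1/d)\ge1/2$ step, and the $-\epsilon$ from $p_S\ge\lambda_S-\epsilon$, carefully enough to land on the stated coefficients $4-5\gamma$ and $16\lambda(1-\gamma)$.
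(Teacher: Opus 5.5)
Your proposal is correct and follows essentially the paper's route. You obtain $(p,q)$-extendibility of the simulated Choi state via a rational $t=q/(p+q)\in(\gamma,\lambda)$, then apply the quantitative distributed-monogamy bound with the Supplement's $\binom{n}{k}^{-1}\vartheta_\mu(KG_{n,k})=\frac{k}{n}+\mu\frac{n-k}{n}$. That expression depends only on $k/n$, so your ``main obstacle'' is already resolved, your hedged constant is exactly $1$, and no extra $-\gamma\epsilon$ correction appears; this gives $\lambda-\epsilon_n\le t+(1-t)\frac{4\epsilon_n}{\lambda-\epsilon_n}$, hence your quadratic in $\lambda-\epsilon_n$ for every $n$, before letting $t\downarrow\gamma$. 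The only cosmetic difference is that you use $1-1/d\ge 1/2$ up front, whereas the paper keeps $c=2d/(d-1)$ general and afterwards notes that the root $x_-$ is decreasing in $c$, so $c=4$ (i.e.\ $d=2$) is the worst case.
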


\textit{Discussions ---} We have introduced a notion of fractional extendibility, proved several of its useful properties, and established the distributed monogamy of entanglement (DME), a fundamental limit to probabilistic extraction of an EPR pair from a subset of quantum systems. In particular, we focus on the quantum erasure channel and provide a simple relation between its fractional extendibility and its erasure probability using DME. Finally we prove that a more-than-$50\%$-erasure channel cannot simulate a better erasure channel.

It is easy to extend the fractional extendibility arguments, developed for the erasure channels in this paper, to the qubit amplitude damping (AD) channel~\cite{Nielsen_Chuang_2010,Leung_1997}. For instance, a qubit AD channel with damping parameter $p/(q+p)$ has a symmetric $(p,q)$-extension. As in the proof of Lemma~\ref{lem:low_bound}, this can be constructed by creating an EPR pair on two qubits, $AB$ and using an appropriate isometry, $U:B\to B_{[p+q]}$ to \textit{distribute} the excitation among $B_i$'s with equal probability. To upper bound the fractional extendibility, one can use the following line of reasoning. With the dual-rail encoding, a pair of qubit amplitude damping channels with damping parameter $p/(q+p)$ can simulate one qubit erasure channel with erasure probability $p/(p+q)$~\cite{Duan_2010,Kubica_2023}. If the AD channel were $(r,s)$-extendible with $r/s>p/q$, then using this simulation, one could violate DME. 

One can similarly generalize Theorem~\ref{thm:hastings} to account for qubit AD channels. Indeed, one obtains a more general version of Hastings' conjecture: for any real number $x\geq 1/2$, arbitrary number of erasure and AD channels with noise parameters (i.e., erasure probability and damping respectively) greater than $x$ cannot simulate a single erasure or AD channel with noise parameter less than $x$.

\textit{Note added ---} Shortly after the first version of this work was posted, we became aware of an independent proof~\cite{alhejji2026constraintsrecoveringquantuminformation} by Alhejji et al. of the erasure-simulation conjecture. 

\textit{Acknowledgements ---} We thank Debbie Leung for helpful discussions and comments. GS is supported under NSERC-NSF alliance grant ALLRP-586858-2023 and NSERC Discovery grant RGPIN-2025-02094. RGA acknowledges the support of the Institute for Quantum Computing and Mike and Ophelia Lazaridis Graduate Fellowship.


\bibliography{apssamp}

@ARTICLE{Horodecki_2008,
  author={Horodecki, Karol and Pankowski, Lukasz and Horodecki, Michal and Horodecki, Pawel},
  journal={IEEE Transactions on Information Theory}, 
  title={Low-Dimensional Bound Entanglement With One-Way Distillable Cryptographic Key}, 
  year={2008},
  volume={54},
  number={6},
  pages={2621-2625},
  doi={10.1109/TIT.2008.921709}
}

@ARTICLE{Smith_2008,
  author={Smith, Graeme and Smolin, John A. and Winter, Andreas},
  journal={IEEE Transactions on Information Theory}, 
  title={The Quantum Capacity With Symmetric Side Channels}, 
  year={2008},
  volume={54},
  number={9},
  pages={4208-4217},
  doi={10.1109/TIT.2008.928269}
}

@article{Simth-Yard_2008,
    author = {Graeme Smith  and Jon Yard },
    title = {Quantum Communication with Zero-Capacity Channels},
    journal = {Science},
    volume = {321},
    number = {5897},
    pages = {1812-1815},
    year = {2008},
    doi = {10.1126/science.1162242}
}

@article{Bennett_1997,
  title = {Capacities of Quantum Erasure Channels},
  author = {Bennett, Charles H. and DiVincenzo, David P. and Smolin, John A.},
  journal = {Phys. Rev. Lett.},
  volume = {78},
  issue = {16},
  pages = {3217--3220},
  numpages = {0},
  year = {1997},
  month = {Apr},
  publisher = {American Physical Society},
  doi = {10.1103/PhysRevLett.78.3217},
  url = {https://link.aps.org/doi/10.1103/PhysRevLett.78.3217}
}

@phdthesis{Alhejji2023,
author={Alhejji,Mohammad A.},
year={2023},
title={Some Problems Concerning Quantum Channels and Entropies},
journal={ProQuest Dissertations and Theses},
pages={93},
isbn={9798379527198},
url={https://proxy.lib.uwaterloo.ca/login?url=https://www.proquest.com/dissertations-theses/some-problems-concerning-quantum-channels/docview/2813838508/se-2}
}

@misc{ahmed2026,
      title={No-Go Theorem for Gaussian Quantum Repeaters from Fractional Extendibility}, 
      author={Rabsan Galib Ahmed and Graeme Smith},
      year={2026},
      eprint={2606.05097},
      archivePrefix={arXiv},
      primaryClass={quant-ph},
      url={https://arxiv.org/abs/2606.05097}, 
}

@ARTICLE{Wootters1982-cz,
  title     = "A single quantum cannot be cloned",
  author    = "Wootters, W K and Zurek, W H",
  journal   = "Nature",
  publisher = "Springer Science and Business Media LLC",
  volume    =  299,
  number    =  5886,
  pages     = "802--803",
  month     =  oct,
  year      =  1982,
  url       =  "https://www.nature.com/articles/299802a0"
}

@article{CSW2014,
  title = {Graph-Theoretic Approach to Quantum Correlations},
  author = {Cabello, Ad\'an and Severini, Simone and Winter, Andreas},
  journal = {Phys. Rev. Lett.},
  volume = {112},
  issue = {4},
  pages = {040401},
  numpages = {5},
  year = {2014},
  month = {Jan},
  publisher = {American Physical Society},
  doi = {10.1103/PhysRevLett.112.040401},
  url = {https://link.aps.org/doi/10.1103/PhysRevLett.112.040401}
}

@ARTICLE{Lovasz_1979,
  author={Lovasz, L.},
  journal={IEEE Transactions on Information Theory}, 
  title={On the Shannon capacity of a graph}, 
  year={1979},
  volume={25},
  number={1},
  pages={1-7},
  doi={10.1109/TIT.1979.1055985}
}

@article{Bennett_2014,
   title={The Quantum Reverse Shannon Theorem and Resource Tradeoffs for Simulating Quantum Channels},
   volume={60},
   ISSN={1557-9654},
   url={http://dx.doi.org/10.1109/TIT.2014.2309968},
   DOI={10.1109/tit.2014.2309968},
   number={5},
   journal={IEEE Transactions on Information Theory},
   publisher={Institute of Electrical and Electronics Engineers (IEEE)},
   author={Bennett, Charles H. and Devetak, Igor and Harrow, Aram W. and Shor, Peter W. and Winter, Andreas},
   year={2014},
   month=May, pages={2926–2959} 
}

@article{Brand_o_2012,
  title = {Detection of Multiparticle Entanglement: Quantifying the Search for Symmetric Extensions},
  author = {Brand\~ao, Fernando G. S. L. and Christandl, Matthias},
  journal = {Phys. Rev. Lett.},
  volume = {109},
  issue = {16},
  pages = {160502},
  numpages = {5},
  year = {2012},
  month = {Oct},
  publisher = {American Physical Society},
  doi = {10.1103/PhysRevLett.109.160502},
  url = {https://link.aps.org/doi/10.1103/PhysRevLett.109.160502}
}

@article{Li_2018,
   title={Squashed Entanglement, $k$-Extendibility, Quantum Markov Chains, and Recovery Maps},
   volume={48},
   ISSN={1572-9516},
   url={http://dx.doi.org/10.1007/s10701-018-0143-6},
   DOI={10.1007/s10701-018-0143-6},
   number={8},
   journal={Foundations of Physics},
   publisher={Springer Science and Business Media LLC},
   author={Li, Ke and Winter, Andreas},
   year={2018},
   month=Feb, pages={910–924} }

@article{Khatri_2017,
   title={Numerical evidence for bound secrecy from two-way postprocessing in quantum key distribution},
   volume={95},
   ISSN={2469-9934},
   url={http://dx.doi.org/10.1103/PhysRevA.95.042320},
   DOI={10.1103/physreva.95.042320},
   number={4},
   journal={Physical Review A},
   publisher={American Physical Society (APS)},
   author={Khatri, Sumeet and Lütkenhaus, Norbert},
   year={2017},
   month=Apr 
}

@article{Kaur_2019,
  title = {Extendibility Limits the Performance of Quantum Processors},
  author = {Kaur, Eneet and Das, Siddhartha and Wilde, Mark M. and Winter, Andreas},
  journal = {Phys. Rev. Lett.},
  volume = {123},
  issue = {7},
  pages = {070502},
  numpages = {7},
  year = {2019},
  month = {Aug},
  publisher = {American Physical Society},
  doi = {10.1103/PhysRevLett.123.070502},
  url = {https://link.aps.org/doi/10.1103/PhysRevLett.123.070502}
}

@article{Kunjwal_2019,
   title={Beyond the Cabello-Severini-Winter framework: Making sense of contextuality without sharpness of measurements},
   volume={3},
   ISSN={2521-327X},
   url={http://dx.doi.org/10.22331/q-2019-09-09-184},
   DOI={10.22331/q-2019-09-09-184},
   journal={Quantum},
   publisher={Verein zur Forderung des Open Access Publizierens in den Quantenwissenschaften},
   author={Kunjwal, Ravi},
   year={2019},
   month=Sept, pages={184} 
}

@ARTICLE{DSW_2013,
  author={Duan, Runyao and Severini, Simone and Winter, Andreas},
  journal={IEEE Transactions on Information Theory}, 
  title={Zero-Error Communication via Quantum Channels, Noncommutative Graphs, and a Quantum Lovász Number}, 
  year={2013},
  volume={59},
  number={2},
  pages={1164-1174},
  doi={10.1109/TIT.2012.2221677}}

@misc{Hastings2008,
      title={Notes on Some Questions in Mathematical Physics and Quantum Information}, 
      author={M. B. Hastings},
      year={2014},
      eprint={1404.4327},
      archivePrefix={arXiv},
      primaryClass={quant-ph},
      url={https://arxiv.org/abs/1404.4327}, 
}

@article{Berta_2011,
   title={The Quantum Reverse Shannon Theorem Based on One-Shot Information Theory},
   volume={306},
   ISSN={1432-0916},
   url={http://dx.doi.org/10.1007/s00220-011-1309-7},
   DOI={10.1007/s00220-011-1309-7},
   number={3},
   journal={Communications in Mathematical Physics},
   publisher={Springer Science and Business Media LLC},
   author={Berta, Mario and Christandl, Matthias and Renner, Renato},
   year={2011},
   month=Aug, pages={579–615} 
}

@ARTICLE{Fang_2020,
  author={Fang, Kun and Wang, Xin and Tomamichel, Marco and Berta, Mario},
  journal={IEEE Transactions on Information Theory}, 
  title={Quantum Channel Simulation and the Channel’s Smooth Max-Information}, 
  year={2020},
  volume={66},
  number={4},
  pages={2129-2140},
  doi={10.1109/TIT.2019.2943858}
}

@ARTICLE{Duan_2016,
  author={Duan, Runyao and Winter, Andreas},
  journal={IEEE Transactions on Information Theory}, 
  title={No-Signalling-Assisted Zero-Error Capacity of Quantum Channels and an Information Theoretic Interpretation of the Lovász Number}, 
  year={2016},
  volume={62},
  number={2},
  pages={891-914},
  doi={10.1109/TIT.2015.2507979}
}

@INPROCEEDINGS{Duan_2010,
  author={Duan, Runyao and Grassl, Markus and Ji, Zhengfeng and Zeng, Bei},
  booktitle={2010 IEEE International Symposium on Information Theory}, 
  title={Multi-error-correcting amplitude damping codes}, 
  year={2010},
  volume={},
  number={},
  pages={2672-2676},
  doi={10.1109/ISIT.2010.5513648}
}

@article{Kubica_2023,
  title = {Erasure Qubits: Overcoming the ${T}_{1}$ Limit in Superconducting Circuits},
  author = {Kubica, Aleksander and Haim, Arbel and Vaknin, Yotam and Levine, Harry and Brand\~ao, Fernando and Retzker, Alex},
  journal = {Phys. Rev. X},
  volume = {13},
  issue = {4},
  pages = {041022},
  numpages = {20},
  year = {2023},
  month = {Nov},
  publisher = {American Physical Society},
  doi = {10.1103/PhysRevX.13.041022},
  url = {https://link.aps.org/doi/10.1103/PhysRevX.13.041022}
}

@book{Nielsen_Chuang_2010, 
place={Cambridge}, 
title={Quantum Computation and Quantum Information: 10th Anniversary Edition}, 
publisher={Cambridge University Press}, 
author={Nielsen, Michael A. and Chuang, Isaac L.}, 
year={2010}}

@article{Leung_1997,
  title = {Approximate quantum error correction can lead to better codes},
  author = {Leung, Debbie W. and Nielsen, M. A. and Chuang, Isaac L. and Yamamoto, Yoshihisa},
  journal = {Phys. Rev. A},
  volume = {56},
  issue = {4},
  pages = {2567--2573},
  numpages = {0},
  year = {1997},
  month = {Oct},
  publisher = {American Physical Society},
  doi = {10.1103/PhysRevA.56.2567},
  url = {https://link.aps.org/doi/10.1103/PhysRevA.56.2567}
}

@ARTICLE{Duncan1976-cl,
  title     = "Norm inequalities for C*-algebras",
  author    = "Duncan, J and Taylor, P J",
  journal   = "Proc. R. Soc. Edinb. Sect. A",
  publisher = "Cambridge University Press (CUP)",
  volume    =  75,
  number    =  2,
  pages     = "119--129",
  year      =  1976
}

@ARTICLE{Conde2024-ht,
  title     = "Norm of the sum of two orthogonal projections",
  author    = "Conde, Cristian",
  journal   = "Banach J. Math. Anal.",
  publisher = "Springer Science and Business Media LLC",
  volume    =  18,
  number    =  3,
  month     =  jul,
  year      =  2024,
  copyright = "https://www.springernature.com/gp/researchers/text-and-data-mining"
}

@misc{alhejji2026constraintsrecoveringquantuminformation,
      title={Constraints on recovering quantum information after erasure}, 
      author={Mohammad A. Alhejji and Noah Lordi and Omkar Baraskar and Ariel Shlosberg and Emanuel Knill},
      year={2026},
      eprint={2607.17319},
      archivePrefix={arXiv},
      primaryClass={quant-ph},
      url={https://arxiv.org/abs/2607.17319}, 
}

@BOOK{Brouwer2011-vh,
  title     = "Spectra of graphs",
  author    = "Brouwer, Andries E and Haemers, Willem H",
  publisher = "Springer",
  series    = "Universitext",
  month     =  dec,
  year      =  2011,
  address   = "New York, NY",
  copyright = "https://www.springernature.com/gp/researchers/text-and-data-mining"
}

@BOOK{Godsil2001-xn,
  title     = "Algebraic Graph Theory",
  author    = "Godsil, Chris and Royle, Gordon",
  publisher = "Springer",
  series    = "Graduate Texts in Mathematics",
  month     =  apr,
  year      =  2001,
  address   = "New York, NY"
}
\begingroup
\appendix
\section{The overlap bound}
Here we provide the proof for Eq.~\eqref{lem:overlap}. For some $S \subseteq_k[n]$, consider the operator $\mathbb{I}_A\otimes\Pi^S_C - \Phi^S_{AC}$, where $\Phi^S_{AC}$ is the maximally entangled state on $AC \subset AD_S$. This is a positive semidefinite operator with eigenvalues no bigger than $1$. Therefore, from the variational characterization of trace distance, we obtain that 
\begin{align*}
    &\Tr{\left(\mathbb{I}_A\otimes\Pi^S_C - \Phi^S_{AC}\right) (\omega_{AD_S}-\rho^{\lambda_S}_{AD_S})} \leq\epsilon\\
    \implies &\Tr{\left(\mathbb{I}_A\otimes\Pi^S_C - \Phi^S_{AC}\right) \omega_{AD_S}} \leq\epsilon.
\end{align*}
 If the final joint state on $AD_SD_T$, together with the processing on the subset $T$, is given by $\omega_{AD_SD_T}$, then we can write,
 \begin{align*}
     &\Tr{\left(\mathbb{I}_A\otimes\Pi^S_C - \Phi^S_{AC}\right)\otimes \Pi^T_C \; \omega_{AD_SD_T}}\\
     \leq &\Tr{\left(\mathbb{I}_A\otimes\Pi^S_C - \Phi^S_{AC}\right) \omega_{AD_S}} \leq \epsilon
 \end{align*} We can swap the roles of $S$ and $T$ in the previous argument, to similarly obtain, $$\Tr{\left(\mathbb{I}_A\otimes\Pi^T_C - \Phi^T_{AC}\right)\otimes \Pi^S_C \; \omega_{AD_TD_S}} \leq \epsilon.$$ 
 Denoting $P_S = \Phi^S_{AC}\otimes \Pi^T_C$ and $P_T = \Phi^T_{AC}\otimes \Pi^S_C$ and adding the last two inequalities we obtain that $$2r_{ST} - \Tr{(P_S + P_T) \omega_{AD_SD_T}} \leq 2\epsilon$$. Rearranging, we have, 
    \begin{align*}
        2r_{ST} &\leq 2\epsilon + \Tr{(P_S + P_T)\; \omega_{AD_SD_T}} \nonumber\\
        & \leq 2\epsilon + \norm{P_S+P_T} \;r_{ST}.
    \end{align*}
    Here we use that for PSD operators, $X$ and $Y$, one has $$\Tr{XY} \leq \norm{X}\Tr {\Pi_{\mathrm{supp}(X)} Y},$$where $\norm{X}$ is the spectral norm of $X$ and $\Pi_{\mathrm{supp}(X)}$ is the projector onto the support of $X$. For two projectors, $P$ and $Q$, the following holds~\cite{Duncan1976-cl,Conde2024-ht}: 
    $$\norm{P+Q} = 1 + \norm{PQ}.$$
    For these specific projectors $P_S$ and $P_T$, the following crucial identity holds: $P_SP_TP_S = P_S/d^2$, implying $\norm{P_SP_T} = 1/d$, where $d = \dim A$. Therefore, combining everything, we obtain that $r_{ST} \leq 2\epsilon/(1-1/d)$.

    We can again use the variational characterisation of the trace distance to lower bound $p_S, p_T$: 
    \begin{align}\label{eq:prob}
        \Tr{\Pi_C^S(\rho_{AD_S}^{\lambda_S} - \omega_{AD_S})} \leq \epsilon \implies p_S \geq\lambda_S-\epsilon.
    \end{align}
Finally, using Eq.~\eqref{eq:overlap}, one can easily verify Eq.~\eqref{lem:overlap}.
\vspace{2mm}
\section{Generalization of Hastings' conjecture}
\begin{corollary}\label{cor:gen-hastings}
    For any real number $x\geq 1/2$, arbitrary uses of erasure and AD channels with noise parameters (i.e., erasure probability and damping respectively) greater than $x$ cannot simulate a single erasure or AD channel with noise parameter less than $x$.
\end{corollary}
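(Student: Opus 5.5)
We will run the extendibility argument of Theorem~\ref{thm:hastings}, using a common rational threshold between the noisy and target parameters. Write the noise parameter of an erasure channel as $1-\gamma$ (erasure probability) and that of an AD channel as its damping. Let every channel $\mathcal{C}_i$ used in the simulation have noise parameter $>x$, and let the target $\mathcal{C}'$ have noise parameter $<x$. By density we pick integers $p\geq q\geq 1$ with $\mathrm{noise}(\mathcal{C}')<p/(p+q)<\min_i \mathrm{noise}(\mathcal{C}_i)$. If infinitely many distinct parameters occur, we instead require only that all of them exceed some $y>x$ and choose $p/(p+q)\in(\mathrm{noise}(\mathcal{C}'),y)$. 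We read ``noise $>x$'' as bounded away from $x$; otherwise the statement is false.

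\textbf{Step 1 (lower bound).} We show that every source channel is $(p,q)$-extendible. An erasure channel with erasure probability $e\geq p/(p+q)$ is a post-processing of $\mathcal{N}_{q/(p+q)}$: the receiver erases further with the appropriate probability. By Lemma~\ref{lem:low_bound} and Lemma~\ref{lem:post-process}, it is therefore $(p,q)$-extendible. For AD channels we use the construction in the Discussion: the isometry spreading the excitation over $p+q$ qubits gives a symmetric $(p,q)$-extension of the AD channel with damping exactly $p/(p+q)$. An AD channel with larger damping equals that channel composed with a further AD channel, by the composition rule $(1-\eta_1)(1-\eta_2)=1-\eta$ for transmissivities. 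Lemma~\ref{lem:post-process} then applies again. The Choi state of the full tensor product of source channels is a tensor product of $(p,q)$-extendible states. The two-state tensor lemma above handles products of different factors, so this Choi state is $(p,q)$-extendible. Lemmas~\ref{lem:post-process} and~\ref{lem:pre_process}, combined with the transpose trick exactly as in the proof of Theorem~\ref{thm:hastings}, show that every simulated Choi state $\mathcal{J}_{\mathcal{D}_n\circ(\bigotimes_i\mathcal{C}_i)\circ\mathcal{E}_n}$ lies in $\mathrm{Ext}_{(p,q)}$.

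\textbf{Step 2 (upper bound on the target).} We need $\mathcal{J}_{\mathcal{C}'}\notin\overline{\mathrm{Ext}_{(p,q)}}$. If $\mathcal{C}'$ is an erasure channel with transmission $\lambda>q/(p+q)$, pick rationals with $q/(p+q)<s/(r+s)\le\lambda$. Post-processing reduces $\mathcal{C}'$ to $\mathcal{N}_{s/(r+s)}$, whose Choi state lies outside $\overline{\mathrm{Ext}_{(p,q)}}$ by Lemma~\ref{lem:upp_bound}. Post-processing is continuous and preserves $\mathrm{Ext}_{(p,q)}$, so it preserves the closure as well, and hence $\mathcal{J}_{\mathcal{C}'}\notin\overline{\mathrm{Ext}_{(p,q)}}$.

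If $\mathcal{C}'$ is an AD channel with damping $\eta<p/(p+q)$, use dual-rail encoding. Two copies of $\mathcal{C}'$, with the encoding $|0\rangle\mapsto|01\rangle$, $|1\rangle\mapsto|10\rangle$ and a decoder that flags the vacuum $|00\rangle$ as erasure, simulate exactly a qubit erasure channel with erasure probability $\eta$. If $\mathcal{J}_{\mathcal{C}'}$ were in $\overline{\mathrm{Ext}_{(p,q)}}$, then $\mathcal{J}_{\mathcal{C}'}^{\otimes 2}$ would be a limit of $(p,q)$-extendible states, by the tensor lemma and continuity. Pre- and post-processing, again continuous, would then place the erasure Choi state with transmission $1-\eta>q/(p+q)$ in the closure, contradicting the erasure case just proved.

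\textbf{Step 3 (conclusion).} A successful simulation produces a sequence in $\mathrm{Ext}_{(p,q)}$ converging to $\mathcal{J}_{\mathcal{C}'}$, which contradicts Step 2. The main obstacle is the AD bookkeeping. One part is checking that the excitation-spreading isometry is a genuine $(p,q)$-extension, with a recovery map $\mathcal{R}_S$ that coherently collects the excitation from the $q$ chosen qubits. The other is verifying that the dual-rail reduction is exact, which requires that the AD Kraus operators never map $|01\rangle$ or $|10\rangle$ to a coherent superposition with $|00\rangle$. For the recovery map, $\mathcal{R}_S$ projects onto the span of $|0\cdots0\rangle$ and the single-excitation states on $S$, then applies the unitary that maps the uniform $W$-type state on $S$ to $|1\rangle$. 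The weight of the one-excitation component on $S$ is $q/(p+q)$, which reproduces transmissivity $1-p/(p+q)$.
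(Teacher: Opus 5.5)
Your argument follows the paper's. You choose a rational threshold $p/(p+q)$ between the target and the sources, and realise each source as a post-processing of the threshold erasure or AD channel, so every simulated Choi state lies in $\mathrm{Ext}_{(p,q)}$. You then reduce the target to a rational erasure channel that Lemma~\ref{lem:upp_bound} excludes: by extra erasure, or for AD by two copies plus dual rail. The one organisational difference is where you take the closure. You show $\mathcal{J}_{\mathcal{C}'}\notin\overline{\mathrm{Ext}_{(p,q)}}$ by pushing the closure through the processing. The paper instead composes the simulating sequence itself with the extra processing; for AD this means two parallel copies of the simulation followed by dual rail. Your version needs one small addition. Lemma~\ref{lem:pre_process} assumes $\mathrm{Tr}\,\mathcal{E}(\rho_A)=1$, and a general approximant in $\mathrm{Ext}_{(p,q)}$ need not satisfy this for the rescaled dual-rail map $2\mathcal{E}^T$. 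You can renormalise, since the normalisation tends to $1$, or work with the simulated Choi states, whose $A$-marginals are maximally mixed.

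The one actual error is your caveat that ``noise $>x$'' must mean bounded away from $x$, ``otherwise the statement is false.'' The statement holds as written. The target's noise is strictly below $x$, so place the threshold in $(\mathrm{noise}(\mathcal{C}'),x]$ rather than below the source parameters; the paper does this with $x_0<z<y<x$. Every source then has noise $>x\ge p/(p+q)$ and is a post-processing of the threshold channel, however its parameters accumulate at $x$. Your Steps 1--3 then apply unchanged.

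Keep your constraint $p\ge q$ when choosing this threshold. Take $p=q=1$ if $\mathrm{noise}(\mathcal{C}')<1/2$, and otherwise any rational in $(\mathrm{noise}(\mathcal{C}'),x)$. This matters at $x=1/2$. There a threshold strictly below $x$, as in the paper's literal choice $y<x$, would force $p<q$, where Proposition~\ref{prop:ent-frac} (which needs $k\le n/2$) is unavailable.
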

\textit{Proof:}
Choose rational numbers \(z,y\) such that $x_0<z<y<x$.
Every available resource channel is a post-processing of
the corresponding erasure or AD channel
with noise parameter \(y\). Hence the simulated Choi
state is \((p,q)\)-extendible, where
\(y=p/(p+q)\).

If the target is an erasure channel with noise parameter
\(x_0\), additional erasure produces the erasure channel
with rational noise parameter \(z\). If the simulation was possible, it would place the
latter in the closure of the \((p,q)\)-extendible states,
contradicting Lemma~\ref{lem:upp_bound}.

If the target is an amplitude-damping channel, two copies
followed by the dual-rail construction produce an erasure
channel with noise parameter \(x_0\). Additional erasure
again produces the rational erasure channel with noise
parameter \(z\), yielding the same contradiction.\qed  
\endgroup

\clearpage
\onecolumngrid
\setcounter{section}{0}
\setcounter{subsection}{0}
\setcounter{equation}{0}
\setcounter{figure}{0}
\setcounter{table}{0}

\renewcommand{\thesection}{S\arabic{section}}
\renewcommand{\thesubsection}{S\arabic{section}.\arabic{subsection}}
\renewcommand{\theequation}{S\arabic{section}.\arabic{equation}}
\renewcommand{\thefigure}{S\arabic{figure}}
\renewcommand{\thetable}{S\arabic{table}}

\begin{center}
    {\large\bfseries Supplemental Material for}\\[4pt]
    {\large \bfseries ``Distributed Monogamy of Entanglement limits Quantum Channel Simulation''}\\[10pt]
{\normalsize
 Rabsan Galib Ahmed$^{1,2}$ and Graeme Smith$^{1,2}$
}\\[7pt]

{\small
$^{1}$Department of Applied Mathematics, University of Waterloo, Ontario N2L 3G1, Canada.
}\\[3pt]

{\small
$^{2}$Institute for Quantum Computing, University of Waterloo, Ontario N2L 3G1, Canada.
}\\[3pt]

\end{center}

In this Supplemental Material, we provide a proof of the lower bound on the erasure-channel simulation error, thereby quantifying the version of Hastings' conjecture proved in the main text. We derive explicit bounds on the $\mu$-approximate Lov\'asz theta function and complete the proof of Proposition~\ref{prop:ent-frac} in the general case. 

\section{Explicit expression for $\vartheta_{\mu}(KG_{n,k})$}
In this section, we derive an explicit expression for the $\mu$-approximate Lov\'asz theta of the graph $KG_{n,k}$. Throughout this section, we assume that \(n\geq2k\). First we prove a variational form of $\vartheta_\mu(G)$. In the following, by $\lambda_{max}(A)$ and $\lambda_{min}(A)$, we denote the largest and the smallest eigenvalues of the matrix $A$ respectively.

\begin{lemma}[Variational form of $\vartheta_\mu(G)$]
    For any graph $G$ on $N$ vertices and any $\mu\geq 0$,
    \begin{align}
        \vartheta_\mu(G) = \max \left\{\lambda_{max}(V): V \in \mathbb{C}^{N\times N},V\geq 0, V_{ii}=1, |V_{ij}|\leq \mu \;\forall \{i,j\}\in E \right\}. \label{eq:var-lov}
    \end{align}
\end{lemma}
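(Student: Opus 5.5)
Both inequalities go through Gram matrices, mirroring the standard proof that $\vartheta(G)$ equals the maximum of $\lambda_{max}$ over feasible Gram matrices.

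\emph{Step 1: $\vartheta_\mu(G)$ is at most the right-hand side of Eq.~\eqref{eq:var-lov}.} Take a $\mu$-OR $\{\ket{v_i}\}_{i\in V}$ with handle vector $\ket{\psi}$, and let $V$ be the Gram matrix, $V_{ij}=\langle v_i|v_j\rangle$. Then $V\geq 0$, $V_{ii}=1$, and $|V_{ij}|\leq\mu$ on edges, so $V$ is feasible. Define $x_i:=\langle v_i|\psi\rangle$, so that
\begin{align*}
\sum_i|\langle\psi|v_i\rangle|^2=\sum_i |x_i|^2=\norm{x}^2 .
\end{align*}
Since $\norm{\psi}=1$,
\begin{align*}
\norm{x}^2=\Big\langle\psi\Big|\sum_i v_i x_i\Big\rangle\leq \Big\|\sum_i x_i v_i\Big\|=\sqrt{x^\dagger V x}\leq \sqrt{\lambda_{max}(V)}\,\norm{x},
\end{align*}
where the first inequality is Cauchy--Schwarz. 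Dividing by $\norm{x}$ gives $\norm{x}^2\leq\lambda_{max}(V)$. (Up to conjugation conventions, this says the objective equals $\langle\psi|\sum_i\ketbra{v_i}{v_i}|\psi\rangle$, and the nonzero spectrum of $\sum_i\ketbra{v_i}{v_i}$ coincides with that of $V$.)

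\emph{Step 2: the right-hand side is at most $\vartheta_\mu(G)$.} Take a feasible $V$ and factor it as $V=W^\dagger W$, with columns $\ket{v_i}\in\mathbb{C}^N$. The conditions $V_{ii}=1$ and $|V_{ij}|\leq\mu$ make $\{\ket{v_i}\}$ a $\mu$-OR. The frame operator $F=\sum_i\ketbra{v_i}{v_i}=WW^\dagger$ has the same nonzero spectrum as $W^\dagger W=V$. Choose $\ket{\psi}$ to be a unit top eigenvector of $F$. Then
\begin{align*}
\sum_i|\langle\psi|v_i\rangle|^2=\langle\psi|F|\psi\rangle=\lambda_{max}(V).
\end{align*}

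\emph{Attainment of the maximum.} The feasible set of $V$ is compact, since it consists of PSD matrices with unit diagonal and closed edge constraints, and $\lambda_{max}$ is continuous. So the right-hand side is a genuine max. In Eq.~\eqref{eq:lovasz-theta} the ambient dimension $m$ is arbitrary, but by Step 1 every value there is bounded by a feasible $\lambda_{max}(V)$, so the two maxima coincide.

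\emph{Main obstacle.} The only delicate point is the spectral identification: the nonzero spectrum of $\sum_i\ketbra{v_i}{v_i}$ equals that of the Gram matrix, and complex-conjugation conventions must be handled consistently. This is routine linear algebra.
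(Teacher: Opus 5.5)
Your proof is correct and follows essentially the same route as the paper. One direction bounds the objective of a $\mu$-OR by $\lambda_{max}$ of its Gram matrix; your Cauchy--Schwarz chain is just an unpacked form of the paper's $\langle\psi|F|\psi\rangle\leq\lambda_{max}(F)=\lambda_{max}(V)$, with $F=TT^\dagger$ and $V=T^\dagger T$. The other direction factors a feasible $V$ (the paper takes $W=V^{1/2}$) and picks the handle as a top eigenvector of the frame operator. Your compactness remark is a small improvement, since the paper simply assumes that an optimal $\mu$-OR and an optimal $V$ exist.
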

\begin{proof}
    For the direction $(\leq)$, consider the optimal $\mu$-OR $\{\ket{v_i}\}$ with handle $\ket{\psi}$ for $\vartheta_{\mu}(G)$. Define $T = \sum_{i=1}^N \ketbra{v_i}{i}$, for a set of orthonormal vectors $\{\ket{i}\}_{i=1}^N$. Now the frame operator, $F = \sum_{i=1}^N \ketbra{v_i}{v_i} = TT^\dagger$ and the Gram matrix, $V = \sum_{i,j=1}^N \langle v_i | v_j \rangle \ketbra{i}{j} = T^\dagger T$. Hence, $F$ and $V$ share their non-zero spectrum.

    Now, $V$ is a feasible solution to the optimization problem on the right-hand side of Eq.~\eqref{eq:var-lov}, and 
    \begin{align*}
        \sum_{i=1}^N |\langle \psi | v_i \rangle|^2 = \bra{\psi} F\ket{\psi} \leq \lambda_{max}(F) = \lambda_{max}(V).
    \end{align*}

    For the other direction $(\geq)$, consider the optimum $V$ for the optimization on the right-hand side. Therefore, columns of $V^{1/2}$, denoted by $\{\ket{v_i}\}_{i=1}^N$ form a $\mu$-OR of the graph $G$. Furthermore, choosing the handle vector $\ket{\psi}$ to be the unit eigenvector of $F$ with eigenvalue $\lambda_{max}(F)$, we have that $\sum_{i=1}^N |\langle \psi | v_i \rangle|^2 = \lambda_{max}(F) = \lambda_{max}(V)$.  
\end{proof}

We give a lower bound on $\vartheta_{\mu}(KG_{n,k})$ by constructing an explicit $\mu$-OR for the graph.
\begin{lemma}[Lower bound on $\vartheta_{\mu}(KG_{n,k})$] For all $0\leq \mu \leq 1$,
\begin{align}\label{eq:low}
    \vartheta_{\mu}(KG_{n,k}) \geq {n-1 \choose k-1}\left(1+ \mu \frac{n-k}{k}\right)
\end{align}
\end{lemma}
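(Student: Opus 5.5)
The plan is to use the variational form of $\vartheta_\mu(G)$ from the previous lemma. There it suffices to exhibit one feasible matrix $V$: PSD, with unit diagonal and $|V_{ST}|\le\mu$ on edges. Then $\lambda_{max}(V)$ is bounded below by the Rayleigh quotient of $V$ at the normalized all-ones vector $\ket{u}={n\choose k}^{-1/2}\sum_{S}\ket{S}$.

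\textbf{Step 1 (an exact $\mu=0$ solution).} Take the explicit orthogonal representation $\ket{v_S}=k^{-1/2}\sum_{i\in S}\ket{e_i}\in\mathbb{C}^n$. Then $\langle v_S|v_T\rangle=|S\cap T|/k$, so $V_0(S,T)=|S\cap T|/k$. This Gram matrix is PSD, has unit diagonal, and vanishes exactly on disjoint pairs, which are the edges of $KG_{n,k}$. Each row sum of $V_0$ is
\begin{align*}
\frac{1}{k}\sum_{T\subseteq_k[n]}|S\cap T| = \frac{1}{k}\sum_{i\in S}\#\{T\ni i\} = {n-1\choose k-1}.
\end{align*}
Hence $\ket{u}$ is an eigenvector of $V_0$ with eigenvalue ${n-1\choose k-1}$. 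This also recovers the lower half of Eq.~\eqref{eq:lovasz-Kneser}.

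\textbf{Step 2 (interpolation with the all-ones matrix).} Let $J$ be the all-ones matrix and set $V=(1-\mu)V_0+\mu J$ for $0\le\mu\le1$. Since $V$ is a convex combination of two PSD matrices with unit diagonal, it is PSD with unit diagonal. On edges the entries are $(1-\mu)\cdot 0+\mu=\mu$, so $V$ is feasible for Eq.~\eqref{eq:var-lov}. Moreover, $\ket{u}$ is an eigenvector of both $V_0$ and $J$; its $J$-eigenvalue is ${n\choose k}$. Therefore
\begin{align*}
\lambda_{max}(V)\ge \bra{u}V\ket{u}=(1-\mu){n-1\choose k-1}+\mu{n\choose k}.
\end{align*}

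\textbf{Step 3 (simplification).} Using ${n\choose k}=\frac{n}{k}{n-1\choose k-1}$, the right-hand side becomes ${n-1\choose k-1}\bigl(1+\mu\frac{n-k}{k}\bigr)$, which gives Eq.~\eqref{eq:low}. The only step needing care is choosing a $\mu=0$ solution whose top eigenvector is uniform; otherwise the $\mu J$ perturbation could not be added along the same direction. The explicit representation of Step 1 resolves this directly, and nothing else is an obstacle.
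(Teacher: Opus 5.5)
Your proof is correct and is essentially the paper's construction in Gram-matrix form: the paper's vectors $(\ket{\chi_S}+a\ket{\beta})/\sqrt{k+2ak+a^2n}$ have Gram matrix exactly $(1-\mu)V_0+\mu J$ with $\mu=\mu(a)$, and its handle $\ket{\beta}/\sqrt{n}$ gives the same value $(1-\mu){n-1\choose k-1}+\mu{n\choose k}$ that you obtain from the Rayleigh quotient at $\ket{u}$. Working directly with the interpolated matrix through the variational form also covers $\mu=1$ without the paper's separate case or the reparametrization by $a$.
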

\begin{proof}
Denote $\ket{\beta} = \sum_{i=1}^n \ket{i}$ and for any $S\subseteq_k[n]$, denote $\ket{\chi_S} = \sum_{j\in S} \ket{j}$, where $\{\ket{i}\}$ is the standard orthonormal basis of $\mathbb{R}^n$.  In the following we show that for every $a \geq 0$, the unit vectors
   \begin{align}
       \ket{v_S} = \frac{\ket{\chi_S} + a \ket{\beta}}{\sqrt{k + 2ak + a^2n}}, \qquad \ket{\psi} = \frac{1}{\sqrt{n}}\ket{\beta},
   \end{align}
   $\forall S\subseteq_k[n]$ form a $\mu(a)$-OR of $KG_{n,k}$ with 
   \begin{align}
       \mu(a) = \frac{2ak + a^2 n }{k+2ak + a^2 n}, \qquad\sum_{S} |\langle \psi| v_S\rangle|^2 = {n-1 \choose k-1}\left(1+ \mu(a) \frac{n-k}{k}\right).
   \end{align}
   
    For disjoint $k$-subsets, $S$ and $T$, we calculate
    \begin{align*}
        \langle v_S|v_T \rangle = \frac{2ak + a^2n}{k+2ak + a^2 n} = \mu(a).
    \end{align*}
    Moreover, we calculate that for all $S\subseteq_k [n]$
    \begin{align*}
        |\langle \psi| v_S\rangle|^2 &= \frac{(k/\sqrt{n} + a\sqrt{n})^2}{k+2ak + a^2 n} = \frac{(k+an)^2}{n(k+2ak + a^2 n)}\\
        &= \frac{k^2 + 2akn + a^2n^2}{n(k+2ak + a^2 n)}\\
        &= \frac{k(k+2ak + a^2 n)+(n-k)(2ak + a^2 n)}{n(k+2ak + a^2 n)}\\
        &= \frac{k}{n} + \mu(a) \frac{n-k}{n} = \frac{k}{n}\left(1+ \mu(a)\frac{n-k}{k}\right).
    \end{align*}
    Summing over the ${n \choose k}$-many $k$-subsets, we immediately obtain
    \begin{align}
        \sum_S |\langle \psi| v_S\rangle|^2 = {n \choose k}\frac{k}{n}\left(1+ \mu(a)\frac{n-k}{k}\right) = {n-1 \choose k-1} \left(1+ \mu(a)\frac{n-k}{k}\right).
    \end{align}
    As $a$ ranges over $[0,\infty)$, $\mu(a)$ ranges over $[0,1)$. For $\mu = 1$, choose all vectors $\ket{v_S}=\ket{\psi}$, the handle vector, to obtain $\vartheta_1(KG_{n,k}) = {n \choose k}$. Thus we prove the lower bound.
\end{proof}

Now we prove a matching upper bound. By $A$, we denote the adjacency matrix, i.e., the $N\times N$ symmetric matrix whose $(i,j)$-entry is $1$ if $\{i,j\}\in E$, and $0$ otherwise. A graph $G$ is called $D$-regular if every vertex of $G$ has exactly $D$ adjacent vertices.
\begin{lemma}
    Let $G$ be $D$-regular on $N$ vertices with $s = -\lambda_{min}(A) >0$. Then for all $\mu\geq 0$,
    \begin{align}
        \vartheta_\mu(G) \leq \frac{N}{D+s}(s+\mu D).
    \end{align}
\end{lemma}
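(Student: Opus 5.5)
We bound $\lambda_{max}(V)$ using the variational form \eqref{eq:var-lov}, in the spirit of Lov\'asz's eigenvalue (Hoffman-type) bound. Fix a feasible $V$: it is PSD, has unit diagonal, and satisfies $|V_{ij}|\le\mu$ on edges. The idea is to build a PSD certificate out of $A$ and the all-ones matrix $J$, and pair it against $V$. Let $M = \mathbb{I} + A/s$. Since $\lambda_{min}(A)=-s$, we have $M\succeq 0$.

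The first step is the trace pairing. Since $V\succeq0$ and $M\succeq 0$, $\Tr(VM)\ge 0$ is valid but too weak. We need a certificate that works against the top eigenvector of $V$. The standard choice is the matrix
\[
W = J - \frac{N}{D+s}\left(\mathbb{I}+\frac{A}{s}\right).
\]
By regularity, the all-ones vector is an eigenvector of $A$ with eigenvalue $D$. On it, $W$ acts as $N - \frac{N}{D+s}\cdot\frac{s+D}{s}\cdot$ (a scalar); more carefully, $\mathbb{I}+A/s$ has eigenvalue $(s+D)/s$ on the all-ones vector and eigenvalues in $[0,(s+D)/s]$ elsewhere. With the right scaling, one obtains an operator inequality $\mathbb{I}+A/s \preceq \frac{D+s}{s}\,\mathbb{I}$.

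Rather than chase that, I would proceed directly. Let $x$ be a unit top eigenvector of $V$, and set $y_i=|x_i|$. Then
\[
\lambda_{max}(V)=x^\dagger Vx\le \sum_i y_i^2+\sum_{\{i,j\}\in E}2\mu\, y_iy_j+\sum_{i\ne j,\{i,j\}\notin E}|V_{ij}|y_iy_j .
\]
Using $|V_{ij}|\le 1$ on non-edges, this gives
\[
\lambda_{max}(V)\le y^T\bigl(\mu A + (J - \mathbb{I} - A)\bigr)y + 1 = y^T\bigl(J-(1-\mu)A\bigr)y .
\]
So it suffices to show that $\lambda_{max}(J-(1-\mu)A)$ is at most the claimed value. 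In general, though, $J$ and $A$ commute only because $G$ is regular.

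The second step handles this commutation. Since $G$ is regular, $J$ and $A$ are simultaneously diagonalizable. On the all-ones vector, $J-(1-\mu)A$ has eigenvalue $N-(1-\mu)D$. On its orthogonal complement, the eigenvalues are $-(1-\mu)\theta$ with $\theta$ ranging over the other eigenvalues of $A$, so they are at most $(1-\mu)s$. This gives
\[
\vartheta_\mu\le\max\{N-(1-\mu)D,\ (1-\mu)s\}.
\]
However, this bound is weaker than the target $\frac{N(s+\mu D)}{D+s}$ (at $\mu=0$ it gives $N-D$ rather than the Hoffman value). The entrywise bounding is therefore too lossy, and the PSD structure of $V$ must be used more seriously.

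The third step is the real plan: a Lagrangian/SDP duality bound. For any feasible $V$ and any symmetric $Z$ with $Z_{ii}=0$, $Z$ supported on edges, and $tI - J/\ldots$ appropriately chosen, weak duality yields a bound. Concretely, since $\lambda_{max}(V)=\max_{\|x\|=1}\Tr(Vxx^\dagger)$, I would bound $\Tr(VX)$ for $X=xx^\dagger$ as follows. Write $X\preceq tI - \beta A + C$, with $C$ a PSD term killed by pairing. Using
\[
\Tr(VX)\le \Tr\bigl(V(tI+\beta A)\bigr)\ \text{ whenever } tI+\beta A - X\succeq0,
\]
and noting that $\Tr(V(tI+\beta A)) \le tN + \beta\mu\cdot 2|E| = tN+\beta\mu ND$, one needs $t,\beta\ge 0$ with $tI+\beta A\succeq xx^\dagger$ for every unit $x$. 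That last condition is false in general, since $x$ is arbitrary. So instead I would use the symmetry trick. Because $KG$-type graphs here are vertex-transitive, one can average $V$ over automorphisms, which preserves feasibility and does not decrease $\lambda_{max}$ restricted to a symmetric handle. This lets us take $x=\mathbf{1}/\sqrt N$, giving
\[
\lambda_{max}\le \frac{1}{N}\mathbf{1}^TV\mathbf{1}.
\]
Then apply $V\succeq0$ against $M=\mathbb{I}+A/s\succeq0$... but we need an upper bound on $\mathbf{1}^TV\mathbf{1}$, not a lower one. The correct certificate is $\frac{N}{D+s}M - J \succeq 0$ restricted suitably. Indeed, $M$ has eigenvalue $(D+s)/s$ on $\mathbf{1}$, so $\frac{s}{D+s}M - \frac{1}{N}J$ kills $\mathbf{1}$ and is PSD elsewhere. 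Pairing with $V$ then gives
\[
\frac1N\mathbf{1}^TV\mathbf{1}\le \frac{s}{D+s}\Tr(VM)=\frac{s}{D+s}\Bigl(N+\frac{1}{s}\textstyle\sum_{ij}A_{ij}V_{ij}\Bigr)\le\frac{s}{D+s}\Bigl(N+\frac{\mu ND}{s}\Bigr),
\]
which is exactly $\frac{N}{D+s}(s+\mu D)$.

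The main obstacle is justifying the reduction to the handle $\mathbf{1}/\sqrt N$ when $G$ is only assumed regular rather than vertex-transitive. I would first try to avoid it. The plan is to take the handle $\psi$ and the vectors $v_i$ from the original definition \eqref{eq:lovasz-theta}, and replace $\mathbf{1}$ by the weighted vector $(\langle\psi|v_i\rangle)_i$. One would then redo the pairing with $M$ using the Gram matrix of the $v_i$ together with $\psi$ (an $(N+1)\times(N+1)$ PSD matrix), as in Lov\'asz's original proof of Theorem 5. If that fails, I would restrict the lemma's use to vertex-transitive graphs, which suffices for $KG_{n,k}$.
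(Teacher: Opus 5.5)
Your certificate is correct: $\frac{1}{N}J \preceq \frac{s}{D+s}\bigl(\mathbb{I}+A/s\bigr)$ is the same operator inequality the paper uses, written there as $\lambda_{max}(J-cA)=cs$ for $c=N/(D+s)$. The gap is in what you pair it with. Pairing against $V$ only bounds $\frac{1}{N}\mathbf{1}^T V\mathbf{1}$, the Rayleigh quotient of $V$ at $\mathbf{1}/\sqrt{N}$. That quantity is $\le \lambda_{max}(V)$, which is the wrong direction. For a merely regular graph nothing forces the top eigenvector of an optimal $V$ to be $\mathbf{1}$.

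Your symmetrization fallback does not repair this. Averaging $V$ over automorphisms preserves feasibility and the value $\mathbf{1}^T V\mathbf{1}$. But $\lambda_{max}$ is convex, so averaging can only lower it, and you never get $\lambda_{max}(V)\le \frac{1}{N}\mathbf{1}^T\bar V\mathbf{1}$. Restricting to vertex-transitive graphs would also prove less than the stated lemma.

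The missing idea is to put the top eigenvector into the matrix rather than the handle. Let $u$ be a unit top eigenvector of $V$ and set $X=\mathrm{diag}(\bar u)\,V\,\mathrm{diag}(u)$, i.e. $X_{ij}=\bar u_i V_{ij} u_j$. Then:
\begin{itemize}
\item $X\succeq 0$;
\item $\Tr X=\sum_i |u_i|^2=1$, because $V$ has unit diagonal;
\item $\langle J,X\rangle = u^\dagger V u=\lambda_{max}(V)$.
\end{itemize}
Pairing your certificate with $X$ instead of $V$ gives $\lambda_{max}(V)\le \frac{Ns}{D+s}\bigl(1+\frac{1}{s}\langle A,X\rangle\bigr)$. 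The edge term is controlled by AM--GM together with regularity:
\[
\langle A,X\rangle \le 2\mu\sum_{\{i,j\}\in E}|u_i||u_j| \le \mu\sum_{\{i,j\}\in E}\bigl(|u_i|^2+|u_j|^2\bigr) = \mu D\sum_i |u_i|^2 = \mu D .
\]
This yields $\frac{N}{D+s}(s+\mu D)$, and it is where regularity is genuinely needed for a non-uniform handle. This is exactly the paper's proof.

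Your idea (a) points in this direction. For the optimal handle, the weights $\langle\psi|v_i\rangle$ are proportional to the entries of $u$ up to conjugation. However, the AM--GM weighted edge bound is the step you would still have to supply.

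If you prefer to symmetrize, average $X$ rather than $V$. The constraint set $\{X\succeq 0 : |X_{ij}|\le \mu\sqrt{X_{ii}X_{jj}}\}$ is convex and the objective $\langle J,X\rangle$ is linear, so the average is again feasible with the same value. Once $X$ is in hand, though, no symmetry is needed at all.
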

\begin{proof}
    Let $V$ be the optimum for Eq.~\eqref{eq:var-lov}, and let $u$ be its top unit eigenvector: $\lambda_{max}(V) = u^\dagger V u$. Define the Hermitian matrix $X$ with entries $X_{ij} = \Bar{u}_i V_{ij} u_j$. Then $X \geq 0, \Tr X = \sum_i |u_i|^2 = 1, X_{ii} = |u_i|^2$. Denoting by $J$ the matrix with all entries equal to $1$, we have
    \begin{align}
        \langle J, X\rangle := \Tr(JX) = \sum_{i,j} \Bar{u}_i V_{ij} u_j = u^\dagger V u = \lambda_{max}(V).
    \end{align}
    Moreover, $\forall\{i,j\}\in E, \;|V_{ij}|\leq \mu \implies |X_{ij}| \leq \mu \sqrt{X_{ii}X_{jj}}$. For a fixed $c>0$, we can split $J = (J-cA) + cA$ , and write
    \begin{align}
       \vartheta_\mu (G)= \lambda_{max}(V) = \langle J- cA, X\rangle + c \langle A,X\rangle.
    \end{align}
    For the first term, note that $J- cA \leq \lambda_{max}(J-cA) \mathbb{I}$. Therefore, $\langle J- cA, X\rangle \leq \lambda_{max}(J-cA) \Tr X = \lambda_{max}(J-cA)$. For the second term, note that 
    \begin{align*}
        \langle A,X\rangle = \sum_{\{i,j\}\in E} X_{ij} + X_{ji} &= 2\sum_{\{i,j\}\in E} \Re(X_{ij})\\
        &\leq 2\sum_{\{i,j\}\in E} |X_{ij}|\\
        &\leq 2 \mu\sum_{\{i,j\}\in E}  \sqrt{X_{ii}X_{jj}}\\
        & \leq \mu \sum_{\{i,j\}\in E} X_{ii} + X_{jj}\\
        & = \mu \sum_{i} \deg(i) X_{ii} = \mu D.
    \end{align*}
    For regular graphs, $J$ and $A$ commute~\cite{Brouwer2011-vh}. In fact $JA = AJ = DJ$. Now the only non-zero eigenvalue of $J$ is $N$ corresponding to the eigenvector, $(1,1,\dots,1 )^T$, and it is obvious that $A (1,1,\dots,1 )^T = D (1,1,\dots,1 )^T$. Therefore, one eigenvalue of $J-cA$ is $N-cD$. All other eigenvalues of $J-cA$ are of the form $-c\lambda$ for some $\lambda\in \mathrm{spec}(A)$. As $A$ has trace zero, at least one such $\lambda$ is negative. Hence, with $s = -\lambda_{min}(A)$, we can write $\lambda_{max}(J-cA) = \max\{N-cD,cs\}$. Therefore, for all $c>0$,
    \begin{align*}
        \lambda_{max}(V) \leq \max\{N-cD,cs\} + c\mu D.
    \end{align*}
    In particular, if we choose $c = N/(D+s)$, we have
    \begin{align}
       \vartheta_\mu(G) = \lambda_{max}(V) \leq \frac{N}{D+s}\left(s+\mu D\right).
    \end{align}
    Thus we prove the claim.
\end{proof}
Applying the preceding result to $KG_{n,k}$ which has $N = {n \choose k}$ vertices, is $D$-regular with $D = {n-k \choose k}$, and $s = {n-k-1 \choose k-1}$~\cite{Godsil2001-xn}, we have the following. 
\begin{corollary}[Upper bound on $\vartheta_{\mu}(KG_{n,k})$] For all $0\leq \mu \leq 1$,
\begin{align}\label{eq:up}
    \vartheta_{\mu}(KG_{n,k}) \leq {n-1 \choose k-1}\left(1+ \mu \frac{n-k}{k}\right)
\end{align}
\end{corollary}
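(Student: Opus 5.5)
The plan is to substitute the Kneser graph parameters into the preceding lemma and simplify the resulting expression. For $KG_{n,k}$ with $n\geq 2k$ we use $N={n\choose k}$, $D={n-k\choose k}$ and $s={n-k-1\choose k-1}$. The case $n=2k$ has $s=1>0$, and in general $s>0$ holds whenever $n\geq 2k$. The lemma then gives
\begin{align*}
\vartheta_\mu(KG_{n,k})\leq \frac{N}{D+s}\,(s+\mu D) = \frac{Ns}{D+s}\left(1+\mu\frac{D}{s}\right).
\end{align*}
It therefore suffices to verify two binomial identities:
\begin{align*}
\frac{D}{s}=\frac{n-k}{k}, \qquad \frac{Ns}{D+s}={n-1\choose k-1}.
\end{align*}

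For the first identity, we write ${n-k\choose k}=\frac{n-k}{k}{n-k-1\choose k-1}$, which is the absorption identity. The second identity then follows from the first:
\begin{align*}
D+s=s\left(1+\frac{n-k}{k}\right)=s\,\frac{n}{k},
\qquad
\frac{Ns}{D+s}=\frac{k}{n}{n\choose k}={n-1\choose k-1}.
\end{align*}
As a consistency check, setting $\mu=0$ recovers the value $\vartheta(KG_{n,k})={n-1\choose k-1}$ from Eq.~\eqref{eq:lovasz-Kneser}. For $\mu\in[0,1]$ the bound matches the lower bound in Eq.~\eqref{eq:low}, so the two together yield the exact expression.

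The only nontrivial ingredient is the spectral input $\lambda_{\min}(A)=-{n-k-1\choose k-1}$. This is the classical Kneser spectrum, whose eigenvalues are $(-1)^i{n-k-i\choose k-i}$ for $i=0,\dots,k$. The most negative of these is attained at $i=1$ when $n\geq 2k$, since $|{n-k-i\choose k-i}|$ decreases in $i$. We take this from~\cite{Godsil2001-xn} as cited. The remainder of the argument is the routine algebra above.
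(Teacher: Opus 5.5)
Your proposal is correct and follows the paper's argument: substitute $N={n\choose k}$, $D={n-k\choose k}$ and $s={n-k-1\choose k-1}$ (the last from the Kneser spectrum in Godsil--Royle) into the preceding regular-graph lemma. You additionally write out the absorption-identity algebra that the paper leaves implicit, and that algebra is correct.
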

Comparing Equations~\ref{eq:low} and~\ref{eq:up}, we obtain the exact expression for the $\mu$-approximate Lov\'asz theta function with $\mu \in [0,1]$,
\begin{align}\label{eq:exact}
    \vartheta_{\mu}(KG_{n,k}) = {n-1 \choose k-1}\left(1+ \mu \frac{n-k}{k}\right)
\end{align}

\section{Lower bound on the simulation error: Quantifying Hastings' conjecture}
With the explicit upper and lower bounds for $\vartheta_\mu(KG_{n,k})$, we can give a quantitative lower bound on the \textit{simulation error}, i.e., the trace distance between the target channel and the simulated channel. 

First we quantify the distributed monogamy of entanglement, in the following sense. 
\begin{prop}\label{prop:quant-dist-monogamy}
    If in the setting of Proposition~\ref{prop:ent-frac} in the main text, we consider the case where any $k$-subset, $S$ can extract a state $\epsilon$-close to an EPR pair with $A$ (and a failure flag otherwise), with probability $\lambda_S = \lambda>\epsilon$, then
    \begin{align}
        \lambda \leq \frac{k}{n} + \epsilon\left(1+ \frac{2d}{(\lambda - \epsilon)(d-1)}\cdot \frac{n-k}{n} \right).
    \end{align}
\end{prop}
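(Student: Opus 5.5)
We follow the proof of Proposition~\ref{prop:ent-frac}, now keeping every error term explicit instead of sending it to zero. With the purification $\ket{\tilde\sigma}$ and the projectors $M_S$ defined there, set $p_S=\bra{\tilde\sigma}M_S\ket{\tilde\sigma}$ and $\ket{v_S}=M_S\ket{\tilde\sigma}/\sqrt{p_S}$. Eq.~\eqref{eq:prob} gives $p_S\geq\lambda-\epsilon>0$ for every $S$, so each $\ket{v_S}$ is well defined.

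The overlap argument in the End Matter gives $r_{ST}\leq 2\epsilon/(1-1/d)=2\epsilon d/(d-1)$ for disjoint $S,T$. Combining this with Eq.~\eqref{eq:overlap}, $r_{ST}=\sqrt{p_Sp_T}\langle v_T|v_S\rangle$, yields
\begin{align*}
|\langle v_T|v_S\rangle|\leq \frac{2\epsilon d}{(d-1)\sqrt{p_Sp_T}}\leq \frac{2\epsilon d}{(d-1)(\lambda-\epsilon)}=:\mu .
\end{align*}
Hence $\{\ket{v_S}\}$ is a $\mu$-OR of $KG_{n,k}$ with handle vector $\ket{\tilde\sigma}$.

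Next we bound the average success probability. Since $p_S=|\langle\tilde\sigma|v_S\rangle|^2$, the definition of $\vartheta_\mu$ gives
\begin{align*}
\lambda-\epsilon\leq {n\choose k}^{-1}\sum_S p_S\leq {n\choose k}^{-1}\vartheta_\mu(KG_{n,k}).
\end{align*}
The upper bound Eq.~\eqref{eq:up} says $\vartheta_\mu(KG_{n,k})\leq{n-1\choose k-1}\bigl(1+\mu\frac{n-k}{k}\bigr)$. Using ${n-1\choose k-1}/{n\choose k}=k/n$, this becomes
\begin{align*}
\lambda-\epsilon\leq\frac{k}{n}+\mu\,\frac{n-k}{n}.
\end{align*}
Substituting $\mu$ gives exactly $\lambda\leq\frac kn+\epsilon\bigl(1+\frac{2d}{(\lambda-\epsilon)(d-1)}\cdot\frac{n-k}{n}\bigr)$.

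The one point needing care is that Eq.~\eqref{eq:up} holds only for $\mu\leq1$. If $\mu>1$, then $\epsilon\frac{2d}{(\lambda-\epsilon)(d-1)}>1$. In that case the right-hand side exceeds $\frac kn+\frac{n-k}{n}=1\geq\lambda$, so the claim holds trivially.

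Beyond this case split, the steps are direct. One should also confirm that the End Matter overlap argument uses only the per-subset $\epsilon$-closeness to $\rho^{\lambda_S}_{AD_S}$; it does, so it applies verbatim here.
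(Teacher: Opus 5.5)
Your proof is correct and follows the paper's argument: set $\lambda_S=\lambda$ in the overlap bound to get $\mu=2d\epsilon/((d-1)(\lambda-\epsilon))$, then combine $p_S\geq\lambda-\epsilon$ with the Kneser-graph bound~\eqref{eq:up} and ${n-1\choose k-1}/{n\choose k}=k/n$. Your separate treatment of $\mu>1$ is valid but not needed, because the regular-graph lemma behind~\eqref{eq:up} already holds for every $\mu\geq 0$.
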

\begin{proof}
    In this case we may set $\lambda_S = \lambda_T = \lambda$ in Eq.~\eqref{lem:overlap}, obtaining
    \begin{align}
        f(\epsilon) = \frac{2d\epsilon}{(d-1)(\lambda - \epsilon)}.
    \end{align}
    Substituting the upper bound, Eq.~\eqref{eq:up}, into Eq.~\eqref{eq:lam-av}, we obtain
    \begin{align}
        \lambda \leq \frac{k}{n} + \epsilon\left(1+ \frac{2d}{(\lambda - \epsilon)(d-1)}\cdot \frac{n-k}{n} \right).
    \end{align}
    Thus we prove the claim.
\end{proof}

Now we are in a position to quantify the simulation error in Hastings' problem. Specifically, we ask: \textit{what is the minimum $\epsilon$ such that} \begin{align}\label{eq:choi-dist}
    \limsup_{n\to \infty} \frac{1}{2}\norm{\mathcal{J}_{\mathcal{N}_\lambda}- \mathcal{J}_{\mathcal{D}_n\circ\mathcal{N}_{\gamma}^{\otimes n}\circ\mathcal{E}_n}}_1 \leq \epsilon
\end{align}
for some $\gamma < \lambda \leq \frac{1}{2}$? Theorem~\ref{thm:hastings} rules out asymptotically vanishing error.
We now derive a uniform strictly positive lower bound in
terms of \(\lambda\), \(\gamma\), and the input dimension
\(d\).
\begin{theorem}
    Let $\gamma < \lambda \leq \frac{1}{2}$ and let $d$ be the input dimension. For every sequence of encoder--decoder pairs, $\{(\mathcal{E}_n,\mathcal{D}_n)\}$, we have
    \begin{align}
       \liminf_{n\to \infty} \frac{1}{2}\norm{\mathcal{J}_{\mathcal{N}_\lambda}- \mathcal{J}_{\mathcal{D}_n\circ\mathcal{N}_{\gamma}^{\otimes n}\circ\mathcal{E}_n}}_1 \geq \lambda - \frac{1}{2}\left(\sqrt{
            \left[\frac{2d}{d-1}(1-\gamma)-\gamma\right]^2+ 4\lambda \frac{2d}{d-1}(1-\gamma)} - \left[\frac{2d}{d-1}(1-\gamma)-\gamma\right] \right)
    \end{align}
\end{theorem}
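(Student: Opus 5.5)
The plan is to combine the fractional extendibility of the simulated Choi state, taken from the proof of Theorem~\ref{thm:hastings}, with the quantitative distributed monogamy bound of Proposition~\ref{prop:quant-dist-monogamy}, and then solve the resulting inequality for $\epsilon$. Write $c:=\frac{2d}{d-1}$ and $B:=c(1-\gamma)-\gamma$. The right-hand side of the claim is then $\lambda-x_+$, where $x_+$ is the positive root of $x^2+Bx-c\lambda(1-\gamma)=0$.

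\emph{Rational $\gamma$.} Suppose first that $\gamma=q/(p+q)$ is rational. As in the proof of Theorem~\ref{thm:hastings}, the states $\omega^{(n)}_{RB}=\mathcal{J}_{\mathcal{D}_n\circ\mathcal{N}_\gamma^{\otimes n}\circ\mathcal{E}_n}$ are $(p,q)$-extendible, by Lemma~\ref{lem:low_bound}, Corollary~\ref{lemma:tensor_inv} and Lemmas~\ref{lem:post-process} and~\ref{lem:pre_process}.

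Let $\epsilon^*$ be the liminf of the simulation errors, and pass to a subsequence along which the errors converge to $\epsilon^*$. If $\epsilon^*\geq\lambda$ there is nothing to prove, since $x_+\geq 0$. Otherwise fix $\epsilon$ with $\epsilon^*<\epsilon<\lambda$, and pick $n$ in the subsequence with error at most $\epsilon$. Take a $(p,q)$-extension $\sigma_{AB_{[p+q]}}$ of $\omega^{(n)}$, with recovery maps $\mathcal{R}_S$ for every $S\subseteq_q[p+q]$. Each output $\mathrm{id}\otimes\mathcal{R}_S(\sigma_{AB_S})=\omega^{(n)}$ is then $\epsilon$-close to $\rho^\lambda$.

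This is exactly the setting of Proposition~\ref{prop:quant-dist-monogamy}, with $n\to p+q$, $k\to q$ and $\lambda_S=\lambda>\epsilon$. The hypothesis $k\leq n/2$ holds because $\gamma<1/2$. Since $q/(p+q)=\gamma$ and $(n-k)/n=1-\gamma$, the proposition gives
\begin{align*}
\lambda\leq \gamma+\epsilon\left(1+\frac{c(1-\gamma)}{\lambda-\epsilon}\right).
\end{align*}

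\emph{Solving for $\epsilon$.} Set $x=\lambda-\epsilon>0$. The inequality becomes $(x-\gamma)\,x\leq c(1-\gamma)(\lambda-x)$, that is, $x^2+Bx-c\lambda(1-\gamma)\leq 0$. The constant term is negative, so this quadratic has one negative and one positive root. Hence $x\leq x_+=\frac12\bigl(\sqrt{B^2+4c\lambda(1-\gamma)}-B\bigr)$, which is $\epsilon\geq\lambda-x_+$. Letting $\epsilon\downarrow\epsilon^*$ gives the claimed bound on the liminf.

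\emph{Irrational $\gamma$.} For irrational $\gamma$, take rationals $\gamma'$ with $\gamma<\gamma'<\lambda$. Then $\mathcal{N}_\gamma$ is obtained from $\mathcal{N}_{\gamma'}$ by additional erasure, which is a channel. Absorbing this post-processing into the encoder side, any scheme using $\mathcal{N}_\gamma^{\otimes n}$ becomes a scheme using $\mathcal{N}_{\gamma'}^{\otimes n}$ with the same error, so the bound holds with $\gamma'$ in place of $\gamma$. The right-hand side is continuous in $\gamma$, so letting $\gamma'\downarrow\gamma$ finishes the proof.

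I do not expect a real obstacle: the algebra is a quadratic inequality, and the positivity conditions ($\lambda>\epsilon$ and $k\leq n/2$) are exactly those handled above.
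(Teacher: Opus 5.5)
Your proposal is correct and follows essentially the paper's route: $(p,q)$-extendibility of the simulated Choi state feeds into Proposition~\ref{prop:quant-dist-monogamy}, the resulting quadratic is solved for the error, and rational approximation with continuity in $\gamma$ finishes the argument; your substitution $x=\lambda-\epsilon$ selects the root a bit more cleanly than the paper's monotonicity argument for $g_t$. One small slip: the extra erasure turning $\mathcal{N}_{\gamma'}$ into $\mathcal{N}_\gamma$ acts on the channel output, so it is absorbed into the decoder (as $\mathcal{D}_n\circ\mathcal{P}^{\otimes n}$), not the encoder; also, the degenerate case $\gamma=0$ (which forces $q=0$) should go through the same approximation argument, which the paper does automatically by always taking a rational $t\in(\gamma,\lambda)$.
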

    \begin{proof}
        Pick any rational $t = q/(p+q) \in (\gamma,\lambda)$. Note the fact that $\mathcal{N}_t$ is $(p,q)$-extendible and $\mathcal{N}_\gamma$ can be obtained from $\mathcal{N}_t$ by further processing. As $(p,q)$-extendibility is monotonic under post processing, we conclude that $\mathcal{J}_{\mathcal{D}_n\circ\mathcal{N}_{\gamma}^{\otimes n}\circ\mathcal{E}_n}$ is $(p,q)$-extendible. For each blocklength $n$, define
\begin{align}
    \epsilon_n
    :=
    \frac{1}{2}
    \norm{
    \mathcal{J}_{\mathcal{N}_\lambda}
    -
    \mathcal{J}_{\mathcal{D}_n\circ
    \mathcal{N}_{\gamma}^{\otimes n}\circ\mathcal{E}_n}
    }_1 .
\end{align}
If $\epsilon_n\geq\lambda$ then the desired lower bound is immediately found. Hence suppose that $\lambda > \epsilon_n$. 
        
        Applying Proposition~\ref{prop:quant-dist-monogamy} gives
        \begin{align}
            \lambda  \leq \frac{q}{p+q}  + \epsilon_n\left(1+ \frac{2d}{(d-1)(\lambda - \epsilon_n)}\frac{p}{p+q}\right).
        \end{align}
        Denote $c = 2d/(d-1)$ and define $g_t(x)
:=
t+x\left(1+\frac{c(1-t)}{\lambda-x}\right),
0\le x<\lambda.
$
Then
$$\lambda\le g_t(\epsilon_n).$$
        As $1/2>t>0, 4\geq c>2$, note that $g(x)$ is a strictly increasing function of $x$ for $0 \leq x < \lambda$, with $g(0) = t$. So we solve for $x$ in the equation for $g(x) = \lambda$ to find the tightest lower bound on $\epsilon_n$.
        \begin{align*}
            &t + x \left(1+ \frac{c}{\lambda - x}(1-t)\right) = \lambda \\
            \implies &t(\lambda - x) + x(\lambda-x) + x c (1-t) = \lambda(\lambda - x)\\
            \implies &  x^2 -(2\lambda-t+c(1-t))x + (\lambda^2 - t\lambda) = 0\\
            \implies &x_{\pm} = \frac{2\lambda-t+c(1-t) \pm \sqrt{(2\lambda-t+c(1-t))^2 - 4\lambda(\lambda-t)}}{2} 
        \end{align*}
        Both of the roots are strictly positive. Simplifying the discriminant and noting that $x_+ >\lambda$, outside our regime of interest, we get the following lower bound for every blocklength,
        \begin{align}\label{eq:reuse}
            \epsilon_n \geq x_{-} = \lambda - \frac{1}{2}\left(\sqrt{[c(1-t)-t]^2+ 4\lambda c(1-t)} -[c(1-t)-t] \right).
        \end{align}
        Taking the sequence of rationals $t$ converging to $\gamma$, we find that the lower bound on $\epsilon_n$ converges to 
        \begin{align}
            \epsilon_n\geq \lambda - \frac{1}{2}\left(\sqrt{
            \left[\frac{2d}{d-1}(1-\gamma)-\gamma\right]^2+ 4\lambda \frac{2d}{d-1}(1-\gamma)} - \left[\frac{2d}{d-1}(1-\gamma)-\gamma\right] \right).
        \end{align}
        Finally in the asymptotically large blocklength limit,
        \begin{align}
           \liminf_{n\to\infty} \epsilon_n\geq \lambda - \frac{1}{2}\left(\sqrt{
            \left[\frac{2d}{d-1}(1-\gamma)-\gamma\right]^2+ 4\lambda \frac{2d}{d-1}(1-\gamma)} - \left[\frac{2d}{d-1}(1-\gamma)-\gamma\right] \right),
        \end{align}
        we finish the proof.
    \end{proof}    
        By taking the derivative of $x_-$ with respect to $c$, one checks that $x_-$ is a decreasing function of $c$. Hence we obtain the weakest bound at $c=4$ (equivalently at $d=2$)
        \begin{align}
            \liminf_{n\to \infty} \frac{1}{2}\norm{\mathcal{J}_{\mathcal{N}_\lambda}- \mathcal{J}_{\mathcal{D}_n\circ\mathcal{N}_{\gamma}^{\otimes n}\circ\mathcal{E}_n}}_1 \geq \lambda - \frac{1}{2}\left(\sqrt{
            \left(4-5\gamma\right)^2+ 16\lambda(1-\gamma)} -  \left(4-5\gamma\right) \right).
        \end{align}

\section{The case when $\lambda_S \leq \epsilon$ for some $k$-subset}
The proof of Proposition~\ref{prop:ent-frac} in the main text is demonstrated for the case when $\lambda_S> \epsilon$ for all $S\subseteq_k[n]$. Even though our proof of Hastings' conjecture only concerns the case: $\lambda_S = \lambda >\epsilon$ for all $S\subseteq_k[n]$, for the sake of completeness we include a proof that encompasses the case of $\lambda_S \leq \epsilon$.

Set the threshold $\tau:=\sqrt{\epsilon} \geq\epsilon$, and define $\mathcal{G}_{\tau}$, the set of good $k$-subsets, $S$, for which $\lambda_S > \tau$. For such sets $p_S \geq \lambda_S -\epsilon \geq. \lambda_S-\tau >0$, i.e., the unit vectors, $\ket{v_S}:= \frac{M_S\ket{\tilde{\sigma}}}{\sqrt{p_S}}$ are well-defined. Furthermore, for any $S,T\in \mathcal{G}_\tau$, we have 
\begin{align}
    |\langle v_S |v_T \rangle| \leq \frac{1}{\sqrt{(\lambda_S-\epsilon)(\lambda_T-\epsilon)}}\frac{2d\epsilon}{d-1} \leq  \frac{1}{\tau - \epsilon}\frac{2d\epsilon}{d-1} = \frac{1}{1 - \sqrt{\epsilon}}\frac{2d\sqrt{\epsilon}}{d-1}=: f(\epsilon)
\end{align}

When $S\notin \mathcal{G}_\tau$, we define unit vectors orthogonal to the vectors for the good sets, by enlarging the Hilbert space. Explicitly, when $S\in \mathcal{G}_\tau$,
\begin{align}
    \ket{v_S} = \frac{M_S\ket{\tilde{\sigma}}}{\sqrt{p_S}} \oplus 0,
\end{align} 
and when  $S\notin \mathcal{G}_\tau$,
\begin{align}
    \ket{v_S} = 0 \oplus \ket{e_S}.
\end{align}
Choose the vectors
$\{|e_S\rangle:S\notin\mathcal G_\tau\}$
to be mutually orthonormal. Therefore, the set of unit vectors $\{\ket{v_S}\}_{S\subseteq_k[n]}$ forms a $f(\epsilon)$-OR of $KG_{n,k}$. Choosing the handle vector, $\ket{\psi} = \ket{\tilde{\sigma}}\oplus 0$, we obtain by the definition of the approximate Lov\'asz theta function that
\begin{align}
    \sum_{S\subseteq_k[n]} |\langle \psi|v_S\rangle|^2 = \sum_{S\in \mathcal{G}_\tau} p_S \leq \vartheta_{f(\epsilon)}(KG_{n,k}).
\end{align}
Now, we write
\begin{align}
    \sum_{S\subseteq_k[n]} \lambda_S = \sum_{S\in \mathcal{G}_\tau} \lambda_S  + \sum_{S\notin \mathcal{G}_\tau} \lambda_S.
\end{align}
The first term is upper bounded as
\begin{align}
    \sum_{S\in \mathcal{G}_\tau} \lambda_S \leq \sum_{S\in \mathcal{G}_\tau} p_S + {n \choose k}\epsilon,
\end{align}
while the second term is upper bounded as
\begin{align}
    \sum_{S\notin \mathcal{G}_\tau} \lambda_S \leq {n \choose k}\tau = {n \choose k}\sqrt{\epsilon}.
\end{align}

Combining everything we have,
\begin{align}
    \sum_{S\subseteq_k[n]} \lambda_S \leq \vartheta_{f(\epsilon)}(KG_{n,k}) + {n \choose k}(\epsilon+\sqrt{\epsilon}) \leq  {n-1 \choose k-1}\left(1+ f(\epsilon) \frac{n-k}{k}\right)+{n \choose k}(\epsilon+\sqrt{\epsilon}).
\end{align}
where we have used the explicit upper bound in Eq~\eqref{eq:up}. Dividing both sides by ${n \choose k}$, we have
\begin{align}
    {n \choose k}^{-1}\sum_{S\subseteq_k[n]} \lambda_S \leq \frac{k}{n} + \epsilon+\sqrt{\epsilon} + \frac{1}{1 - \sqrt{\epsilon}}\left(\frac{2d\sqrt{\epsilon}}{d-1}\right)\frac{n-k}{n} = \frac{k}{n} + \Delta_\epsilon.
\end{align}
Note that $\lim_{\epsilon\to 0 }\Delta_{\epsilon}=0$. Thus we complete the proof.

\end{document}